\newtheorem{theorem}{Theorem} 
\newtheorem{lemma}{Lemma}
  \newtheorem{definition}{Definition} 
  \newtheorem{remarka}{Remark}
 \newenvironment{proof}{{\bf Proof.}}{\hfill\rule{2mm}{2mm}} 
 \newenvironment{pproof}[1]{\noindent{\textbf{Proof of #1.}}}{\hfill\rule{2mm}{2mm}}
\newcommand{\oi}{{\hat{i}}}
\newcommand{\cent}{{\rm cent}\xspace}
\newcommand{\cost}{{\rm cost}\xspace}
\newcommand{\mR}{{\mathcal{R}}}
\newcommand{\mB}{{\mathcal{B}}}
\newcommand{\rbm}{{\textsc{Budgeted Red-Blue Median}}\xspace}
\newcommand{\km}{{\textsc{$k$-Median}}\xspace}
\newcommand{\mm}{{\textsc{Matroid Median}}\xspace}
\newcommand{\ufl}{{\textsc{Uncapacitated Facility Location}}\xspace}
\newcommand{\cfl}{{\textsc{Capacitated Facility Location}}\xspace}
\title{Tight Analysis of a Multiple-Swap Heuristic for Budgeted Red-Blue Median}
\author{Zachary Friggstad\thanks{This research was undertaken, in part, thanks to funding from the Canada Research Chairs program and an NSERC Discovery Grant.} \qquad Yifeng Zhang\\
Department of Computing Science\\
University of Alberta\\
\texttt{\{zacharyf,yifeng2\}@ualberta.ca}}
\date{}
\begin{document}
\maketitle

\begin{abstract}
\rbm is a generalization of classic \km in that there are two sets of facilities, say $\mR$ and $\mB$, that can be used to serve clients located in some metric space.
The goal is to open $k_r$ facilities in $\mR$ and $k_b$ facilities in $\mB$ for some given bounds $k_r, k_b$
and connect each client to their nearest open facility in a way that minimizes the total connection cost.

We extend work by Hajiaghayi, Khandekar, and Kortsarz [2012] and show that a {\em multiple-swap} local search heuristic can be used to obtain a
$(5+\epsilon)$-approximation for \rbm for any constant $\epsilon > 0$. This is an improvement over their single swap analysis and beats the previous
best approximation guarantee of 8 by Swamy [2014].

We also present a matching lower bound showing that for every $p \geq 1$, there are instances of \rbm with local optimum solutions for the $p$-swap
heuristic whose cost is $5 + \Omega\left(\frac{1}{p}\right)$ times the optimum solution cost. Thus, our analysis is tight up to the lower order terms.
In particular, for any $\epsilon > 0$ we show the single-swap heuristic admits local optima whose cost
can be as bad as $7-\epsilon$ times the optimum solution cost.
\end{abstract}



\section{Introduction} \label{sec:intro}

Facility location problems crop up in many areas of computing science and operations research. A typical problem involves a set of clients
and possible facility locations located in a metric space. The goal is to open some facilities and connect each client to some open facility as cheaply as possible.
These problems become difficult when there are costs associated with opening facilities or additional constraints that ensure we cannot open too many facilities.

We study \rbm, one particular instance of this type of problem.
Here we are given a set of clients $C$, a set of {\em red} facilities $\mR$, and a set of {\em blue} facilities $\mB$. These are located in some metric space
with metric distances $d(i,j) \geq 0$ for any two $i,j \in C \cup \mR \cup \mB$. Additionally, we are given two integer bounds $k_r \leq |\mR|$ and $k_b \leq |\mB|$.
The goal is to select/open $k_r$ red facilities $R$ and $k_b$ blue facilities $B$ to minimize
\[ \cost(R \cup B) := \sum_{j \in C} \min_{i \in R \cup B} d(i,j). \]
The classic NP-hard \km problem appears as a special case when, say, $\mR = \emptyset$. Thus, \rbm is NP-hard. In this paper, we focus on approximation algorithms
for \rbm, in particular on local search techniques.


\subsection{Previous Work}
The study of \rbm from the perspective of approximation algorithms
was initiated by Hajiaghayi, Khandekar, and Kortsarz \cite{HKK12}, where they obtain a constant-factor approximation by a local search algorithm that iteratively tries to swap one red
and/or one blue facility in the given solution. They do not specify the constant in their analysis, but it looks to be greater than 8.
Citing \cite{HKK12} as inspiration, Krishnaswamy et al. studied a generalization of \rbm known as \mm \cite{KKNSS11}. Here, a matroid structure is given over the set of facilities and we can only
open a set of facilities if they form an independent set in the matroid. They obtain a constant-factor approximation for \mm through rounding an LP relaxation.
This was later refined to an 8-approximation by Swamy \cite{S14}.

The special case of \km is a classic optimization problem and has received a lot of attention from both theoretical and practical communities.
The best approximation guarantee known so far is 2.675 by Byrka et al. \cite{BPRST15}, who build heavily on the breakthrough work
of Li and Svensson for the problem \cite{LS13}.

While local search techniques have been used somewhat infrequently in the design of approximation algorithms in general,
it may be fair to say that they have seen the most success in facility location problems.
For almost 10 years, the best approximation for \km was based on a local search algorithm. Arya et al. \cite{AGKMMP04} show that a multiple-swap heuristic leads
to a $(3+\epsilon)$-approximation for \km for any constant $\epsilon > 0$. This analysis was simplified in \cite{GT08}, which inspires much of our analysis.

Other successful local search applications include a $(1+\sqrt 2)$-approximation for \ufl is also obtained through local search \cite{AGKMMP04,CG05}. 
Local search has been very helpful in approximating \cfl, the first constant-factor approximation was by P\'al, Tardos, and Wexler \cite{PTW01}
and the current best approximation is a $(5+\epsilon)$-approximation by Bansal, Garg, and Gupta \cite{BGG12}. In the special case when
all capacities are uniform, Aggarwal et al. \cite{AABGGGJ10} obtain a 3-approximation.
Even more examples of local search applied to other facility location variants can be found in \cite{AFS13,GN11,GT08,MP03,ST10}.
%
\subsection{Our Results and Techniques}
We show that a multiswap generalization of the local search algorithm considered in \cite{HKK12} is a $(5+\epsilon)$-approximation for \rbm. That is, for a value $p$ say the $p$-swap heuristic is the algorithm
that, upon given an initial feasible solution, tries to swap up to $p$ facilities of each colour. If no such swap produces a cheaper solution, it terminates. Otherwise, it iterates with the now cheaper solution.
The formal description is given in Algorithm \ref{alg:alg_main} in Section \ref{sec:prelim}.

Say that a solution is locally optimum for the $p$-swap heuristic if no cheaper solution can be found by swapping up to $p$ facilities of each colour.
Let $OPT$ denote the cost of an optimum solution. Our main result is the following.
\begin{theorem}\label{thm:main}
Any locally optimum solution for the $p$-swap heuristic has cost at most $(5 + O(1/\sqrt p)) \cdot OPT$.
\end{theorem}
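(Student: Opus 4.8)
The plan is to adapt the standard local-search analysis for \km (à la Arya et al.\ and Gupta--Tandon \cite{AGKMMP04,GT08}) to the two-colour setting. Let $S = R \cup B$ be a locally optimum solution and let $O = R^* \cup B^*$ be an optimum solution, where $|R| = |R^*| = k_r$ and $|B| = |B^*| = k_b$. For each client $j$ write $s_j = \min_{i \in S} d(i,j)$ and $o_j = \min_{i \in O} d(i,j)$ for its connection cost in the local and global optimum. The goal is to construct a family of \emph{test swaps}, each swapping at most $p$ facilities of each colour, such that (i) every swap is feasible (it preserves both the red and blue cardinality bounds) and (ii) summing the local-optimality inequalities $\cost(S - A + A') - \cost(S) \ge 0$ over all test swaps in the family yields the bound $\cost(S) \le (5 + O(1/\sqrt p))\cdot \OPT$.

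First I would set up the bipartite structure relating $S$ and $O$. Map each facility $i^* \in O$ to its nearest facility $\eta(i^*) \in S$ of the \emph{same colour}; this is the crucial departure from \km, since a red optimal centre must be captured by swapping in a red facility, and likewise for blue. Call a facility $i \in S$ \emph{captured} if some $i^* \in O$ maps to it. I would then partition each colour class of $O$ and the captured facilities of $S$ into groups, as in the \km analysis: facilities of $S$ that capture exactly one optimal centre can be paired one-to-one, and the remaining ``overloaded'' or ``empty'' facilities are handled in blocks. Because the two colours must balance independently, I would run this grouping \emph{separately within each colour}, producing red test-swap groups and blue test-swap groups, and then combine a bounded number of groups of each colour into a single test swap so that each swap removes and adds the same number of red facilities and the same number of blue facilities --- this is what keeps every swap feasible.

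For the bounding step, each test swap that removes $A \subseteq S$ and adds $A' \subseteq O$ must be charged so that the reassignment cost is controlled. The standard reassignment is: clients served in $O$ by an added facility $i^* \in A'$ are rerouted to $i^*$ (contributing $o_j$), while clients that were served in $S$ by a removed facility $i \in A$ but are \emph{not} served in $O$ by any added facility must be rerouted to a surviving facility --- typically to $\eta(o\text{-centre})$ --- incurring cost at most $s_j + o_j + (\text{extra detour})$ by the triangle inequality. Summing these inequalities, each client's $o_j$ appears a bounded number of times and each $s_j$ cancels against the negative $-s_j$ terms, and a careful accounting gives $\cost(S) \le 5\,\OPT$ plus an error term coming from the blocks of size up to $p$. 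I expect the $5$ (rather than the $3$ of \km) to arise from the fact that when we must reroute a client whose optimal facility is of the ``wrong'' colour to be swapped in together with the facility serving it in $S$, the detour passes through an extra pair of facilities, roughly doubling part of the charge.

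\textbf{The main obstacle} I anticipate is the grouping/feasibility step: ensuring that the test swaps can be assembled to respect \emph{both} cardinality constraints simultaneously while keeping the number of facilities swapped of each colour at most $p$, and doing so in a way that still lets every client be rerouted cheaply. In \km there is a single constraint and a single colour, so the pairing is clean; here a red surplus cannot be compensated by a blue deficit, so the two colour classes may have genuinely different group structures. The error term $O(1/\sqrt p)$ --- rather than $O(1/p)$ --- strongly suggests the analysis forms groups of size about $\sqrt p$ and swaps about $\sqrt p$ groups at once, so that each facility is tested in roughly $\sqrt p$ swaps while each swap stays within the budget; balancing these two sources of loss is where the $\sqrt p$ scaling, and the bulk of the technical work, will lie.
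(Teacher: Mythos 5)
Your plan is only a sketch, and the one concrete structural choice it commits to is the wrong one: it would sink the argument. You propose to map each $i^* \in O$ to its nearest facility $\eta(i^*) \in S$ \emph{of the same colour}. The entire reassignment machinery rests on the bound $d(j, \phi(o_j)) \leq c_j + 2c^*_j$, whose proof uses $d(o_j, \phi(o_j)) \leq d(o_j, s_j)$ --- and that inequality holds only because $\phi(o_j)$ is the nearest facility to $o_j$ in \emph{all} of $S$, irrespective of colour. With your colour-restricted $\eta$, the facility $s_j$ serving $j$ in the local optimum may have the opposite colour from $o_j$, and $\eta(o_j)$ can then be arbitrarily far from $j$, so the rerouting cost is no longer bounded by $2c^*_j$ plus the released $c_j$. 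For the same reason, running the grouping ``separately within each colour'' cannot work: the genuine difficulty of the problem (and the reason the paper's blocks deliberately mix colours) is precisely the facilities $i \in S$ whose captured set $\phi^{-1}(i)$ contains facilities of the other colour, so that opening those optimum facilities forces closing a facility of a different colour, and the two colour classes cannot be balanced independently. The paper handles this with a cross-colour block decomposition (balanced, good, and bad groups assembled so that each block is balanced in both colours and has a designated leader).

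You have also not located the actual source of the constant $5$. It does not come from a ``detour through an extra pair of facilities.'' It comes from the situation where a facility $i \in S$ with $|\phi^{-1}(i)|$ large must be closed but there is no room in the swap budget to open all of $\phi^{-1}(i)$; one then opens only the nearest captured facility $\cent(i)$ and reroutes clients $j$ with $\phi(o_j) = i$ to $\cent(\phi(o_j))$ at cost at most $3c^*_j + c_j$. The parasitic $+c_j$ term must be cancelled by a second round of swaps that re-derives $c^*_j - c_j$ for exactly those clients while charging an extra $2c^*_j$ to others; it is this second round --- each local facility being closed to the extent of about $2$ rather than $1$ --- that pushes $3$ up to $5$. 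Finally, the $O(1/\sqrt p)$ does not arise from ``$\sqrt p$ groups of size $\sqrt p$'': the paper sets $p = t^2+1$ and uses asymmetric thresholds ($t^2$ for one colour, $t$ for the other) within each block, so that when up to $t$ facilities of one colour must be closed alongside a forced swap, a random choice among at least $t^2$ candidates of the other colour is hit with probability only $1/t = O(1/\sqrt p)$. As written, your proposal does not contain a proof, and its key design decision would have to be reversed before the remaining gaps could be filled.
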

Using standard techniques (briefly mentioned in Section \ref{sec:prelim}), this readily leads to a polynomial-time approximation algorithm.
By choosing $p = \theta(1/\epsilon^2)$ we have the following.
\begin{theorem}
For any constant $\epsilon > 0$, \rbm admits a polynomial-time $(5+\epsilon)$-approximation.
\end{theorem}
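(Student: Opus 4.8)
The plan is to adapt the local-search analysis of \km from \cite{AGKMMP04,GT08} so that the test swaps respect the two colour budgets. Fix a locally optimum solution $S$ and an optimum solution $O$, and for a client $j$ let $\sigma(j)$ and $\sigma^*(j)$ be its nearest facilities in $S$ and $O$, writing $S_j = d(j,\sigma(j))$ and $O_j = d(j,\sigma^*(j))$ so that $\cost(S) = \sum_j S_j$ and $OPT = \sum_j O_j$. Define $\phi : O \to S$ by taking $\phi(o)$ to be the facility of $S$ nearest to $o$; this is the map along which displaced clients are rerouted, since for any $j$ with $\sigma^*(j) = o$ we have $d(o,\phi(o)) \le d(o,\sigma(j)) \le O_j + S_j$. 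Following the \km template, call $s \in S$ \emph{lonely} if $\phi^{-1}(s) = \emptyset$ and \emph{crowded} if $|\phi^{-1}(s)| \ge 2$.

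The only essential difference from \km is that a valid test swap must move the same number of facilities of each colour: a swap removing $A \subseteq S$ and inserting $B \subseteq O$ is feasible only if $|A \cap \mR| = |B \cap \mR| \le p$ and $|A \cap \mB| = |B \cap \mB| \le p$. I would therefore seek a partition of $O$ into blocks $B_1, \dots, B_m$ together with disjoint sets $A_1, \dots, A_m \subseteq S$ satisfying: (i) $\phi^{-1}(A_i) \subseteq B_i$, so that any client rerouted to $\phi(\sigma^*(j))$ during swap $i$ lands on a surviving facility; (ii) $(A_i, B_i)$ is colour-balanced; and (iii) each colour class of $A_i$ and of $B_i$ has at most $p$ elements, making each swap feasible. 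Condition (i) forces $A_i$ to contain one unit of image mass per element of $B_i$, and the surplus created by crowded facilities is offset by handing lonely facilities to the blocks. In \km this offsetting is a one-dimensional counting argument; here it must be carried out \emph{separately for each colour}, which is the crux, because $\phi$ ignores colours --- the facility of $S$ nearest a red optimal facility may be blue --- so a block that is closed under $\phi$ need not be colour-balanced, and restoring balance requires pairing colour-surplus blocks against colour-deficit blocks.

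Given such a family, I would sum the local-optimality inequalities $0 \le \cost(S - A_i + B_i) - \cost(S)$. To bound each term I reassign every client $j$ with $\sigma(j) \in A_i$: to $\sigma^*(j)$ when $\sigma^*(j) \in B_i$, at cost $O_j$, and otherwise to $\phi(\sigma^*(j))$, which by (i) survives and by the triangle inequality costs at most $2O_j + S_j$; clients with $\sigma(j) \notin A_i$ are left in place. Arranging that each $o \in O$ is inserted exactly once and, crucially, that each $s \in S$ is removed only a bounded number of times (the feature that separates the multi-swap bound from the weaker single-swap bound), the summed inequalities give $\cost(S) \le 5\,OPT$ up to the error introduced by the colour-repair step. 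The two units over the analogous $3$-approximation for \km are the unavoidable price of the colour constraint and persist as $p \to \infty$. The vanishing term arises from balancing block size against block count in the repair: a block of a given colour can absorb only about $p$ repair facilities before violating (iii), so the number of blocks scales inversely with their size while the extra charge incurred at each block scales with its size; equalizing these at a common scale of $\sqrt p$ yields an overhead of $O(1/\sqrt p)\cdot(\cost(S) + OPT)$, and rearranging $\cost(S) \le 5\,OPT + O(1/\sqrt p)(\cost(S) + OPT)$ for large $p$ proves the theorem.

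I expect the colour-repair step to be the main obstacle: one must apportion the lonely facilities of $S$ to the blocks so that colour balance (ii) and the size bound (iii) hold at once, while keeping the number of clients that must be rerouted through the extra hop $\phi(\sigma^*(j))$ --- rather than directly to $\sigma^*(j)$ --- small enough that the resulting overhead is $O(1/\sqrt p)$ and not a constant. Controlling this trade-off through a careful, two-parameter grouping of the facilities is where I anticipate the real work of the proof to lie.
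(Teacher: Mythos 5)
Your plan reproduces the \km template but founders on the step you yourself flag as ``the crux'': a block family satisfying (i)--(iii) simultaneously does not exist in general, and the failure is not a technicality costing only $O(1/\sqrt p)$ --- it is the reason the answer is $5$ rather than $3$. Concretely, suppose the single blue local facility $s$ has $\phi^{-1}(s) = \{b^*, o_1,\dots,o_m\}$ with $b^*$ blue and the $o_i$ red, $m \gg p$, while every red local facility is lonely. Colour balance (ii) forces the block that inserts $b^*$ to remove $s$; your condition (i) then forces that block to insert all $m$ red facilities of $\phi^{-1}(s)$, violating (iii). So some feasible test swap must close a facility $s$ \emph{without} opening all of $\phi^{-1}(s)$, and a client $j$ with $\sigma(j)$ closed and $\phi(\sigma^*(j)) = s$ then has no surviving facility within $2O_j + S_j$. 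The paper handles this by always opening $\cent(s)$, the element of $\phi^{-1}(s)$ nearest to $s$, whenever such an $s$ is closed, and rerouting the stranded client $j$ to $\cent(\phi(\sigma^*(j)))$ at reassignment cost at most $3O_j + S_j$. The resulting $+S_j$ terms depend \emph{positively} on the local solution's cost and must be cancelled by a second round of swaps contributing $-S_j$; those extra swaps close each local facility roughly twice instead of once, turning the per-client charge $2O_j$ into $4O_j$ and the ratio $3$ into $5$.

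Your write-up asserts the $5$ without this mechanism --- indeed, if your (i)--(iii) held, your own accounting would yield $3 + O(1/\sqrt p)$, and the remark that the extra two units are ``the unavoidable price of the colour constraint'' is a conclusion, not an argument. The pieces you would need to add are: the weaker triangle bound $d(j,\cent(\phi(\sigma^*(j)))) \le 3O_j + 2S_j$; a block decomposition in which every non-leader local facility with nonempty $\phi$-preimage is \emph{good} (its preimage lies entirely in the opposite colour), so that closing it while opening only its $\cent$ together with one opposite-colour optimal facility keeps the swap colour-feasible; randomized/averaged choices of which lonely facilities to close so each is closed $O(1)$ times in expectation; and the cancellation swaps that eliminate the $+S_j$ terms for the ``bad'' clients. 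The $O(1/\sqrt p)$ loss then arises from a case split on whether each colour class of a block's optimal facilities has size above or below $t$ versus $t^2$ (with $p=t^2+1$), not from the block-size versus block-count trade-off you describe.
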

This improves over the 8-approximation for \rbm in \cite{S14}.
We emphasize the approximation guarantee from Theorem \ref{thm:main} is for \rbm only,
the 8-approximation in \cite{S14} is still the best approximation for
the general \mm problem. Indeed, \cite{KKNSS11} show that \mm cannot be approximated within any constant factor using any constant number of swaps
even in the generalization of \rbm where there can be a super-constant number facility colours.

We also present a lower bound that matches our analysis up the lower order terms.
\begin{theorem}\label{thm:localgap}
For any integers $p,\ell$ with $1 \leq p \leq \ell/2$, there is an instance of \rbm that has a locally-optimum solution for the $p$-swap heuristic with
cost at least $\left(5 + \frac{2}{p} - \frac{10p}{\ell+1}\right) \cdot OPT$.
\end{theorem}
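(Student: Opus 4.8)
The plan is to construct, for each admissible pair $(p,\ell)$, an explicit metric instance together with two feasible solutions: a \emph{global} solution $O$ of cost $OPT$, and a \emph{local} solution $S$ that is a fixed point of the $p$-swap heuristic but whose cost is at least $\left(5 + \frac{2}{p} - \frac{10p}{\ell+1}\right)\cdot OPT$. Following the template of the locality-gap constructions for \km in \cite{AGKMMP04,GT08}, I would realize the metric as the shortest-path metric of a weighted graph, which automatically enforces the triangle inequality and lets me control all distances by choosing edge weights. The instance will contain on the order of $\ell$ red and $\ell$ blue facility locations in each of $S$ and $O$, with budgets $k_r, k_b = \Theta(\ell)$ and with the clients attached tightly (at unit distance) to the facilities of $O$. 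The role of $\ell$ is to make the gadget large enough that any $p$-swap touches only a $\Theta(p/\ell)$ fraction of it, which is precisely the source of the $-\tfrac{10p}{\ell+1}$ correction term.

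First I would fix the combinatorial skeleton and the distances so that $\cost(O)$ is small and immediate to read off (every client sits at its nearest $O$-facility), while in $S$ each client is served from roughly five times farther away, giving the base ratio $5$. The colour structure is essential and must be exploited to defeat swaps: I would arrange the facilities so that strictly reducing a client's connection cost requires opening a specific facility of $O$ of the \emph{matching} colour, while the budget constraints force every admissible swap to remove an equal number of same-coloured facilities from $S$. This interference between the two colour classes — you cannot open a cheap blue route without paying to close a useful blue facility, and symmetrically for red — is what pushes the base ratio from the \km value of $3$ up to $5$.

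The technical heart of the argument is verifying that $S$ is locally optimal, i.e.\ that no swap of at most $p$ red and at most $p$ blue facilities of $S$ strictly decreases $\cost(S)$. I would bound the net gain of an arbitrary admissible swap directly: after removing up to $p$ facilities of each colour and opening the same numbers of new facilities, the clients that can be re-routed more cheaply form only $O(p)$ groups, each saving a bounded amount, whereas the clients orphaned by the removed facilities must travel to a surviving facility of $S$ at a cost that cancels these savings. The accounting should show the total gain is non-positive, the residual slack being exactly the $\frac{2}{p} - \frac{10p}{\ell+1}$ gap between $5$ and the realized ratio. The main obstacle will be handling \emph{all} admissible swap patterns uniformly — in particular asymmetric swaps that remove unequal numbers of red and blue facilities, and combined swaps that open several facilities of $O$ at once — since the reassignment cost must be controlled for every such pattern, not merely the natural one-to-one swaps.

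Finally, with local optimality established, I would compute $\cost(S)/\cost(O)$ directly from the chosen edge weights and check that it meets the stated bound. Specializing to $p = 1$ and letting $\ell \to \infty$ then yields a single-swap local optimum of cost approaching $7\cdot OPT$, recovering the $(7-\epsilon)$ lower bound claimed in the abstract.
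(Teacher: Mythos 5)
Your high-level strategy matches the paper's: build an explicit weighted graph whose shortest-path metric gives the instance, exhibit the two solutions, verify no $p$-swap improves the local one, and compute the ratio. But the proposal stops at the strategy. The entire content of Theorem \ref{thm:localgap} is the gadget itself, and you never specify it: no distances, no facility counts, no budgets, no client placement. ``Arrange the facilities so that strictly reducing a client's connection cost requires opening a specific facility of the matching colour'' is a description of the property you need, not a construction that has it. The paper's gadget is quite particular: $k_r = p+1$ and $k_b = p(\ell+1)$ (note the strong asymmetry between the two colours, contradicting your guess of $k_r, k_b = \Theta(\ell)$), three groups of facilities with distances $\alpha = 2p(\ell-p)$ and $\beta = 2p$ tuned so that the tight swaps have cost change \emph{exactly} zero, and a large right group of $p^2(\ell+1)$ clients who pay cost $1$ in \emph{both} solutions. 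That last point also breaks your intuition that ``each client is served from roughly five times farther away'': the ratio $5 + 2/p$ is an aggregate effect in which most clients contribute identically to both costs and a small number of clients in the left and middle groups carry all of the gap.

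The second genuine gap is the local-optimality verification, which you defer to ``the accounting should show the total gain is non-positive.'' Because the construction is tight (several swap patterns yield cost change exactly $0$), there is no slack for a loose bound; one must compute the exact cost change of every admissible swap. The paper does this by a four-way case analysis on the number $R$ of red facilities swapped and on whether the single hub facility $\oi$ is closed, optimizing over the adversary's choice of which optimum facilities to open; even then, one case ($R \geq 2$ with $\oi$ closed) yields a cost change of $2p^3 - 4p^2$, which is nonnegative only for $p \geq 2$ and must be separately excused for $p = 1$ because that case is vacuous there. Nothing in your proposal engages with this level of detail, and without it the claim that $S$ is a local optimum is unsupported. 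In short: the approach is the right one, but the proof is missing its object and its verification.
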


By letting $\ell \rightarrow \infty$ but keeping $p$ fixed, we see that the $p$-swap heuristic cannot guarantee a ratio better than
$5 + \frac{2}{p}$. So, Theorem \ref{thm:main} is tight up to lower order terms.
Also, for $p = 1$ we see that the single-swap heuristic analyzed in \cite{HKK12} is not better than a 7-approximation.

Local search techniques are typically analyzed by constructing a set of candidate {\em test swaps} where some facilities in the optimum solution are swapped in and some from the local
optimum are swapped out in order to generate a useful inequality.
One of the main features of the \km analysis in \cite{AGKMMP04} and \cite{GT08} is that such swaps can be considered that ensure each facility in the global optimum is swapped in once
and, by averaging some swaps, each facility in the local optimum is swapped out to the extent of at most $1+O(\epsilon)$ times. Each time a facility in the local optimum is swapped out, they pay an additional
2 times the global optimum cost for some clients to reassign them.

We obtain only a $5+\epsilon$ approximation because we end up swapping out some facilities in the local optimum solution to the extent of $2+O(\epsilon)$, thereby paying an additional $2 + O(\epsilon)$
more than in the \km analysis.
Ultimately, this is because some of our initial swaps generate inequalities that depend {\em positively} on client assignment costs in the local optimum.
So we consider additional swaps that do not introduce any more positive dependence on the local optimum to cancel them out.

This issue was also encountered in the analysis in \cite{HKK12}. In some sense, we are showing that this is the only added difficulty over the standard \km analysis.
However, the averaging arguments we use are a fair bit more sophisticated than the analysis for \km.


\subsection{Organization}
Section \ref{sec:prelim} presents the algorithm and describes some useful notation. In particular, it presents a way to decompose the global and
local optimum solution into structured groups that are examined in the analysis. Section \ref{sec:multiswap} analyzes the
quality of locally optimum solutions to prove Theorem \ref{thm:main}. Section \ref{sec:localgap} proves Theorem \ref{thm:localgap}
with an explicit construction of a bad example. We conclude with some remarks in Section \ref{sec:conclusion}.


\section{Notation and Preliminaries}\label{sec:prelim}

Say that a {\em feasible solution} is a pair $(R,B)$ of subsets $R \subseteq \mR$ and $B \subseteq \mB$ with
$|R| = k_r$ and $|B| = k_b$. Algorithm \ref{alg:alg_main} describes the local search algorithm.

\begin{algorithm}
\caption{The $p$-Swap Heuristic for \rbm}\label{alg:alg_main}
\begin{algorithmic}
\State Let $(R,B)$ be arbitrary feasible solution.
\While{there is some feasible solution $(R',B')$ with $|R-R'| \leq p$ \\
\hspace{9mm} and $|B-B'| \leq p$ and $\cost(R' \cup B') < \cost(R \cup B)$}
\State $(R,B) \leftarrow (R',B')$
\EndWhile
\State \Return $(R,B)$
\end{algorithmic}
\end{algorithm}

While a single iteration of Algorithm \ref{alg:alg_main} can be executed in $n^{O(p)}$ (where $n$ is the total number of locations in the problem), it may be that the number
of iterations is not polynomially bounded. We can employ a well-known trick to ensure it does terminate in a polynomial number of steps
while losing only another $\epsilon$ in our analysis.
The idea is to perform the update only if $\cost(R' \cup B') \leq (1 - \epsilon/\Delta) \cdot \cost(R \cup B)$ where $\Delta$ is some quantity that is
polynomial in the input size.
Our analysis is compatible with this approach; one can check that the total weight of all inequalities we consider is polynomially bounded.
For example, see \cite{AGKMMP04} for details. We do not focus any further on this detail, and instead work toward analyzing
the cost of the solutions produced by Algorithm \ref{alg:alg_main} as it is stated.


%
%
 
From now on, let $S = R \cup B$ with $R \subseteq \mR, B \subseteq \mB$ denote an arbitrary local optimum solution. That is, there is no cheaper solution $(R',B')$ with $|R-R'| \leq p$ and $|B-B'| \leq p$.
Also fix a global optimum solution $O = R^* \cup B^*$ where $R^* \subseteq \mB$ and $B^* \subseteq \mB$. We assume that $S \cap O = \emptyset$. This is without loss of generality, as we can duplicate
each facility location in the input and say that $S$ use the first copies and $O$ use the second copies. It is easy to verify that $S$ is still a local optimum solution.

To help analyze the cost, we will introduce some notation.
For any client $j \in \mathcal{C}$, let $s_j \in S$ denote the local optimum facility is closest to $j$ and $o_j \in O$
be the optimum facility that is closest to $j$. For brevity, let $c_j = d(j, s_j)$ be the cost of assigning $j$ in the local optimum and $c^*_j = d(j, o_j)$ the cost of assigning $j$ in the global optimum.
Thus, $\cost(S) = \sum_{j \in C} c_j$ and $\cost(O) = \sum_{j \in C} c^*_j$.
For any facility $i^* \in O$ we let $N^*(i^*) = \{ j \in C : o_j = i^*\}$ and for any $i \in S$ we let $N(i) = \{j \in C : s_j = i\}$.

Let $\phi : O \rightarrow S$ map each facility in $O$ to its nearest facility in $S$, breaking ties arbitrarily.
For $i \in S$, let $\deg(i) = |\phi^{-1}(i)|$. If $\deg(i) \neq 0$, let $\cent(i)$ be the facility in $\phi^{-1}(i)$ that is closest to $i$, again breaking ties arbitrarily.
 
We also borrow some additional notation from \cite{HKK12}.
\begin{definition}[very good, good, bad facility]\label{def:facgood}
A facility $i \in S$ is \textit{very good} if $\deg(i) = 0$, \textit{good} if no $i^* \in \phi^{-1}(i)$ has the same colour as $i$, and \textit{bad} otherwise.
\end{definition}

The analysis in \cite{HKK12} divides $S \cup O$ into {\em blocks} that satisfy certain properties. We require slightly stronger properties than
their blocks guarantee. We also use a slightly different notion of what it means for some $i \in S$ to be a {\em leader}. The required properties
are summarized in the following lemma.

\begin{lemma}\label{lem:block_prop}
We can partition $S \cup O$ into {\em blocks} $T$ satisfying the following properties.
\begin{itemize}
\item $|T \cap R| = |T \cap R^*|$ and $|T \cap B| = |T \cap B^*|$.
\item For every $i \in S \cap T$, we also have $\phi^{-1}(i) \subseteq T$. For every $i^* \in O \cap T$, we have $\phi(i^*) \in T$.
\item There is some facility $\oi \in T \cap S$  with $\deg(\oi) > 0$ designated as the {\em leader} that has the following properties.
Every other $i \in T \cap S - \{\oi\}$ is either good or very good and all good $i \in T \cap S - \{\oi\}$ have the same colour.
\end{itemize}
\end{lemma}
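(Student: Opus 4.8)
The plan is to build the blocks from the forest induced by $\phi$. Draw an arc from each $o \in O$ to $\phi(o) \in S$; the weakly connected components are \emph{stars}, one centred at each $i \in S$ with $\deg(i) > 0$ together with its whole preimage $\phi^{-1}(i)$, plus the isolated very good facilities (those with $\deg(i)=0$). Property~2 is exactly the requirement that each block be a union of entire stars together with some very good facilities: the condition $\phi^{-1}(i) \subseteq T$ says a center drags in its star, and $\phi(i^*) \in T$ says an optimum facility only ever appears alongside its center. So I will only group complete stars and then sprinkle in very good facilities. Throughout I will lean on the global identities $|R|=|R^*|=k_r$ and $|B|=|B^*|=k_b$, which force the red (resp.\ blue) counts of $S$ and $O$ to agree, together with the identity $\#\{\text{very good } i\} = \sum_{\deg(i)>0}(\deg(i)-1)$ showing there are globally just enough very good facilities to pad out the stars, exactly as in the single-colour \km decomposition of \cite{GT08}.

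For Property~1 it is convenient to track, for each star centred at $i$, its \emph{red deficit} $r(i)-[\text{$i$ is red}]$ and \emph{blue deficit} $b(i)-[\text{$i$ is blue}]$, where $r(i),b(i)$ count the red and blue facilities in $\phi^{-1}(i)$; a block is colour-balanced precisely when both deficits sum to zero over its stars and very good facilities. The key observation is that a bad facility has both deficits nonnegative, a good or very good facility contributes deficit $-1$ in its own colour and a nonnegative deficit in the other colour, and the total of each deficit over all of $S \cup O$ is zero. Since bad facilities are the only positive-degree facilities sharing a colour with something in their image, Property~3 forbids two of them in one block, so I will make every bad facility the leader of its own block and then attach good and very good facilities to cancel its two nonnegative deficits; any good facilities not consumed this way will be collected into separate, good-led blocks.

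The heart of the argument is this attachment/balancing step, run so that Properties~1 and~3 hold simultaneously. The clean moves use very good facilities, which have no side-effect on the other colour, so a bad leader's blue deficit and red deficit would ideally be zeroed by very good facilities of the matching colour. Two complications force more care. First, Property~3 only constrains the \emph{non-leader} good facilities to be monochromatic, so a block may contain at most one good facility of an ``off'' colour, namely its leader; I exploit this to absorb leftover good facilities by pairing a single good facility of one colour (as leader) with degree-compatible good facilities of the other colour plus very good filler. Second, and this is the real obstacle, is the \emph{colour coupling}: cancelling a red deficit with a good red facility simultaneously injects $\deg(i)$ units of blue surplus, and the monochromatic rule means a single bad leader may only draw good facilities of one colour, so both of its deficits cannot be repaired with good facilities at once. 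I resolve this by matching good facilities of opposite colour against one another and against the residual surplus of the bad leaders, using very good facilities (unconstrained by the monochromatic rule) as the flexible filler. Feasibility of this matching follows from a Hall-type deficiency count built on the two global colour identities: the total positive red deficit, contributed by bad facilities and by good blue facilities, is cancelled exactly by the unit red slacks of the red good and red very good facilities, and symmetrically for blue, so the bookkeeping closes.

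Finally I verify the three properties of the resulting partition. Property~2 holds since every block is a union of complete stars plus very good facilities; Property~1 holds since each block was assembled to zero out both deficits; and Property~3 holds since each block has at most one bad facility (its leader when present, else a good leader) and the non-leader good facilities were kept monochromatic by construction. I expect the colour coupling of the previous paragraph to be the main difficulty: unlike \km, where very good facilities pad each star without interaction, here repairing one colour with a good facility perturbs the other, and making the distribution feasible under the monochromatic leader constraint is exactly what forces the sharper counting and the strengthening over the blocks of \cite{HKK12}.
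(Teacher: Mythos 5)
You have the right skeleton---decomposing $S\cup O$ into the stars of $\phi$, padding with very good facilities, tracking per-colour deficits, and correctly identifying the colour coupling created by the monochromaticity constraint on non-leader good facilities---and none of your individual claims about deficits is false. But there is a genuine gap exactly at the point you call ``the real obstacle'': asserting that the constrained matching is feasible because ``the bookkeeping closes'' from the two global colour identities is not a proof. Those identities are only a necessary condition. What must actually be shown is that the demands can be met \emph{simultaneously} under the constraints that (i) a bad leader may draw non-leader good facilities of only one colour, (ii) each good facility of colour $c$ placed in a block injects $\deg(i)$ units of opposite-colour deficit which, by monochromaticity, can then only be absorbed by very good facilities of that opposite colour, and (iii) a good-led block is balanced only if its leader's degree, its non-leaders' degrees, and the very good facilities allotted to it satisfy exact arithmetic relations. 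This is a constrained packing problem, not a bipartite matching, so a ``Hall-type'' count does not apply off the shelf, and even a genuine Hall argument would have to control all subsets rather than the grand totals.

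The paper closes this gap with two devices absent from your sketch. First, it pads each star with very good facilities into a \emph{group} $G$ satisfying $|G\cap S|=|G\cap O|$ before any matching is attempted; for such a group the red and blue deficiencies sum to zero, so your two-dimensional bookkeeping collapses to a single signed quantity per group. Second---and this is the key trick---whenever a star cannot be padded into a balanced or good group, the construction first \emph{exhausts} all remaining very good facilities of the deficient colour. After the first time this happens, only one colour of very good facility survives, so \emph{every} bad group (the first and all later ones) is deficient in the same colour, say with blue deficiency strictly positive, while every good group with a blue representative has blue deficiency exactly $-1$ and the other group types have nonnegative blue deficiency. Only at that point does the single counting identity (total deficiency zero) suffice to attach enough blue-representative good groups to each bad group, with monochromaticity holding automatically. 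Without an argument forcing this uniformity of the deficiency's sign across all bad groups, your matching step remains unestablished; supplying it is essentially the content of the paper's Lemmas on group formation and block assembly.
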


We focus on analyzing one block at a time to prove the approximation guarantee. This provides us with a cleaner way to describe the test
swaps and the additional structure will help handle the inevitable cases where we have to swap out some $i \in S$ but cannot swap
in all of $\phi^{-1}(i)$.
For example, this can happen if all blue facilities $i \in B$ have $\deg(i)$ being very large (so all $\deg(i') = 0$ facilities are red).
We will still need to close some of them in order to open facilities in $B^*$
when generating bounds via test swaps.


\subsection{Generating the Blocks} \label{sec:block}

We prove Lemma 1 in this section.
First, we describe how to partition $S \cup O$ into groups. These will then be combined to form the final blocks.
We say that a {\em group} is a subset $G$ of $S \cup O$ where $|G \cap S| = |G \cap O|$, there is exactly one $\oi \in G \cap S$ with $\deg(\oi) > 0$,
and $G \cap O = \phi^{-1}(\oi)$. Call this facility $\oi$ the {\em representative} of the group.

We classify groups $G$ in one of three ways.
\begin{itemize}
\item {\bf Balanced}: $|G \cap R| = |G \cap R^*|$ and $|G \cap B| = |G \cap B^*|$.
\item {\bf Good}: $\oi$ is a good facility and all other $i \in G \cap S - \oi$ have a different colour than $\oi$.
Note this means $|G \cap R| = |G \cap R^*| \pm 1$.
\item {\bf Bad}: $G$ is neither balanced nor good.
\end{itemize}

Note that here the good and bad are referring to groups, we emphasize that these are different than good and bad facilities.
Algorithm \ref{alg:alg1} describes a procedure for forming groups in a particular way that will be helpful in creating the final blocks.

\begin{algorithm}
\caption{Procedure for partitioning into groups}\label{alg:alg1}
\begin{algorithmic}
\State $S' \leftarrow S, O' \leftarrow O$
\While{$\exists$ some facility $i$ in $S'$ with $\deg(i) > 0$}
\State $G \leftarrow  \phi^{-1}(i) + i$
\If{$G \cup X$ is a balanced group for some $X \subseteq S'$}
\State $G' \leftarrow G \cup X$
\ElsIf{$G \cup X$ is a good group for some $X \subseteq S'$}
\State $G' \leftarrow G \cup X$
\Else
\State Let $X \subseteq \{i' \in S' : \deg(i') = 0\}$ such that $G \cup X$ is a bad group and\\ \hspace{12mm} all facilities in $S' - X$ have the same colour. \Comment{c.f. Lemma \ref{lem:lem1}}
\State $G' \leftarrow G \cup X$.
\EndIf
\State Output group $G'$ \Comment{$i$ is the representative of $G'$}
\State $S' \leftarrow S' - G', O' \leftarrow O'-G'$
\EndWhile
\end{algorithmic}
\end{algorithm}

\begin{lemma}\label{lem:lem1}
Each iteration correctly executes (i.e. succeeds in creating a group).
\end{lemma}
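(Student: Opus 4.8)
The plan is to argue by induction on the iterations of Algorithm~\ref{alg:alg1}, carrying two structural invariants throughout the execution: $|S'| = |O'|$ and $\phi(O') \subseteq S'$. Both hold at initialization, and each output group $G'$ preserves the first since $|G' \cap S| = |G' \cap O|$ by the definition of a group. For the second, note that the only facilities $G'$ deletes from $S'$ are the representative $i$ and a set $X$ of very good facilities, while the only optimum facilities it deletes are $\phi^{-1}(i)$; because a very good facility has empty $\phi$-preimage, no surviving $i^* \in O'$ can map to a deleted facility, so $\phi(O') \subseteq S'$ is maintained.

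The first payoff is that enough very good facilities are always on hand. Since every element of $O'$ maps under $\phi$ into $S'$, we get $|O'| = \sum_{i' \in S' :\, \deg(i') > 0} \deg(i')$, and combining this with $|S'| = |O'|$ shows that the number of very good facilities in $S'$ equals $\sum_{i' :\, \deg(i') > 0}(\deg(i') - 1) \geq \deg(i) - 1$. Hence at least $\deg(i) - 1$ very good facilities always remain, which is precisely the number needed to grow $G = \phi^{-1}(i) + i$ into a set with $|G \cap S| = |G \cap O| = \deg(i)$. This settles the cardinality requirement in all three branches, so the whole question reduces to getting the colours right.

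For the colour analysis I would write $\rho, \beta$ for the number of red and blue facilities in $\phi^{-1}(i)$ and $v_r, v_b$ for the number of red and blue very good facilities currently in $S'$, and then record, as a function of the colour of $i$, exactly which colour counts $X$ must have to make $G \cup X$ balanced and, when $i$ is a good facility, good. A short check shows a balanced group exists precisely when both $v_r$ and $v_b$ meet thresholds determined by $\rho, \beta$ and the colour of $i$, and a good group exists precisely when $i$ is a good facility and enough very good facilities of the colour opposite $i$ remain. Enumerating the cases (according to whether $i$ is a good or a bad facility, and which colour of very good facility is in short supply) shows the balanced or the good branch fires whenever neither colour has been drawn down below its threshold.

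The crux, and the step I expect to be the main obstacle, is the bad branch: I must show that when neither a balanced nor a good group is available, one can still choose $X$ among the very good facilities so that $G \cup X$ is bad \emph{and} all of $S' - X$ is one colour. Since $i \in S' - X$, monochromaticity forces every facility of the colour opposite $i$ into $X$, hence to be very good. I would therefore strengthen the induction hypothesis with a colour invariant of the form: at the start of each iteration, either both very good stocks still suffice (so the balanced or good branch applies for a suitable representative), or every positive-degree facility of $S'$ shares a single colour. Granting this, I would take the representative $i$ of that common colour; the scarce colour is then the opposite one and, by the invariant, consists entirely of very good facilities, so sweeping all of them into $X$ and padding with very good facilities of $i$'s colour both makes $G \cup X$ bad (some colour count is now off by at least two, and in the off-by-one situation $i$ is a bad facility, so the group still fails to be good) and leaves $S' - X$ monochromatic. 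This state is self-sustaining, since thereafter every $X$ is forced to $i$'s colour. The delicate part is verifying that this colour invariant is genuinely preserved through the balanced and good branches and that one transitions into the monochromatic regime exactly when a stock first runs short, i.e.\ that the procedure never reaches a non-monochromatic state from which no balanced or good group can be built for any admissible choice of representative. This is where the colour accounting sketched above must be carried out in full.
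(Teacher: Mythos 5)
Your counting argument (maintaining $|S'|=|O'|$ and $\phi(O')\subseteq S'$ to conclude that $S'$ always holds $\sum_{i':\deg(i')>0}(\deg(i')-1)\geq \deg(i)-1$ very good facilities) is exactly the paper's first step and is fine. The problem is the second half. You explicitly defer the crux --- ``this is where the colour accounting sketched above must be carried out in full'' --- so the proof is incomplete as written; and the route you sketch for closing it does not work for the algorithm as stated. You read the bad branch as requiring \emph{all} of $S'-X$ to be monochromatic, deduce that every positive-degree facility of the colour opposite $i$ must already be gone, and therefore try to install a global invariant (``either both stocks suffice or all positive-degree facilities of $S'$ share one colour'') together with a rule for \emph{which} representative to pick. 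But Algorithm~\ref{alg:alg1} picks an arbitrary $i$ with $\deg(i)>0$, and the invariant you want is simply false for that algorithm: take $i_1$ red with $\phi^{-1}(i_1)$ consisting of three blue optimum facilities, $i_2$ blue with $\phi^{-1}(i_2)$ a single red optimum facility, and two red very good facilities. If the first iteration picks $i_1$, neither the balanced nor the good branch can fire (no blue very good facilities exist), yet $S'-X=\{i_1,i_2\}$ contains both colours. So a proof organized around forcing $S'-X$ to be literally monochromatic cannot succeed.

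What the lemma actually needs --- and what the paper proves --- is much weaker and purely local to the current iteration: if the balanced and good branches both fail, then (by the counting bound) exactly one colour of \emph{very good} facility is in short supply relative to the colour profile of $\phi^{-1}(i)$; put every remaining very good facility of that deficient colour into $X$ and pad with the other colour up to $|X|=|\phi^{-1}(i)|-1$. A short check (splitting on whether $i$ is a good or a bad facility) shows the resulting group is neither balanced nor good, hence bad. The only monochromaticity that matters downstream, in Lemma~\ref{lem:app2}, is among the \emph{very good} facilities that survive, and that is automatic: the deficient colour's stock is exhausted the first time the bad branch fires and is never replenished, so all later bad groups are deficient in the same colour. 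Your threshold bookkeeping for $v_r,v_b$ versus $\rho,\beta$ is the right raw material for this two-case check; you should carry it out locally per iteration rather than trying to propagate a colour invariant across iterations.
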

\begin{proof}
By a simple counting argument, there are always exactly
\[|O' - \{i \in S' : \deg(i) \neq 0\}|\]
very good facilities in $S'$. So we can always find a subset of very good facilities $X$
such that $G \cup X$ is a group. We prove that if the first two if conditions are false then we can find $X$ to ensure $S'-X$ only contains facilities of one colour.

There are 2 cases. Suppose $i$ is bad and, without loss of generality, that it is also red. Because we cannot extend $G$ to be a balanced group,
either there are less than $|\phi^{-1}(i) \cap \mB|$ very good blue facilities in $S'$ or less than $|\phi^{-1}(i) \cap \mR|-1$ very good red facilities in $S'$. In either case, first add
all very good facilities from the ``deficient'' colour to $X$ to use up that colour and then add enough very good facilities to $X$ of the other colour to ensure $|X| = |\phi^{-1}(i)|-1$.

In the other case when $i$ is good, we again assume without loss of generality that it is red.
Because we cannot form a good group, there are fewer than $|\phi^{-1}(i)|-1$ very good
blue facilities in $S'$. Use them up when forming $X$ and then add enough very good red facilities so that $|X| = |\phi^{-1}(i)| - 1$.
\end{proof}

Let $\mathcal G$ be the collection of groups output by Algorithm \ref{alg:alg1}.
We now show how to piece these groups together to form blocks.
\begin{algorithm}
\caption{Procedure for generating blocks}\label{alg:alg2}
\begin{algorithmic}
\State $\mathcal G' \leftarrow \mathcal G$
\While{there is some balanced group $G$ in $\mathcal G'$}
\State Output $G$ as a block with its own representative being the leader.
\State $\mathcal G' \leftarrow \mathcal G' - G$.
\EndWhile
\While{there are good groups $G, G' \in \mathcal G'$ with different coloured representatives}
\State Output the block $G \cup G'$ and choose either representative as the leader.
\State $\mathcal G' \leftarrow \mathcal G' - \{G,G'\}$
\EndWhile
\While{there is some bad group $G \in \mathcal G'$} \Comment{c.f. Lemma \ref{lem:app2}.}
\State Let $\mathcal G_0 \subseteq \mathcal G' - G$ consist only of good groups such that $G + \mathcal G_0$ is a block.
\State Output $G + \mathcal G_0$ with the representative of $G$ as the leader.
\State $\mathcal G' \leftarrow \mathcal G' - (G  + \mathcal G_0)$.
\EndWhile
\end{algorithmic}
\end{algorithm}

It is easy to verify that any ``block'' that is output by this algorithm indeed satisfies the properties listed in Lemma \ref{lem:block_prop}.

The following lemma explains why this procedure correctly executes and why all groups are used up. That is, it finishes the partitioning of $S \cup O$ into blocks.
For a union of groups $G^* = G_1 \cup \ldots \cup G_k$, define the {\em blue deficiency} of $G^*$ as $|G^*\cap B^*| - |G^* - B|$.
\begin{lemma}\label{lem:app2}
If $G$ is a bad group considered in some iteration of the last loop, we can find the corresponding $\mathcal G_0$ so that $G + \mathcal G_0$ is a block.
Furthermore, after the last while loop terminates then $\mathcal G' = \emptyset$.
\end{lemma}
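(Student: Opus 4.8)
The plan is to reduce the whole statement to an additive count of \emph{blue deficiency}. Since every group $G\in\mathcal G$ satisfies $|G\cap S|=|G\cap O|$, so does any union of groups; hence a union $T$ satisfies the balance property of Lemma~\ref{lem:block_prop} if and only if it is \emph{blue balanced}, i.e. $|T\cap B|=|T\cap B^*|$ (zero blue deficiency), because blue balance together with $|T\cap S|=|T\cap O|$ forces red balance. Blue deficiency is additive over disjoint groups, so I would first record the deficiency of each type: a balanced group contributes $0$; a good group contributes $+1$ if its representative is red and $-1$ if blue (this is the $\pm1$ already observed after the Good-group definition). I would also note that in every group produced by Algorithm~\ref{alg:alg1} the only positive-degree facility is the representative, so all other facilities are very good; concretely, a good group with a red (resp.\ blue) representative $\oi$ consists of $\oi$ together with $|\phi^{-1}(\oi)|-1$ blue (resp.\ red) very good facilities.

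Next I would describe the matching done by the last loop and verify the block properties, assuming the deficiency signs cooperate (justified below). After the first two loops all balanced groups are gone, and since the second loop pairs off good groups with opposite representative colours, the remaining good groups in $\mathcal G'$ share a single representative colour; assume without loss of generality it is blue, so each contributes deficiency $-1$. Given a bad group $G$ with blue deficiency $\delta_G\ge 0$, I would take $\mathcal G_0$ to be any $\delta_G$ of the remaining blue-representative good groups, so that $G+\mathcal G_0$ has total deficiency $\delta_G-\delta_G=0$ and is therefore blue balanced, hence balanced. For the leader/colour requirement: make the representative of $G$ the leader (it has positive degree), and observe that every other facility of $G+\mathcal G_0$ is either very good or is the representative of one of the good groups in $\mathcal G_0$; those representatives are all blue, so all non-leader good facilities share one colour. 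Thus $G+\mathcal G_0$ is a block. Finally, since blue deficiency is additive and the total over $S\cup O$ is $|B^*|-|B|=k_b-k_b=0$, while balanced and paired blocks contribute $0$, the sum of $\delta_G$ over all bad groups equals the number of leftover blue-representative good groups; hence the matching consumes \emph{exactly} all leftover good groups, giving $\mathcal G'=\emptyset$.

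The crux, and the step I expect to be the main obstacle, is showing that every bad group really has blue deficiency of the sign assumed above (nonnegative in the blue-majority case), so that the matching is feasible. I would prove this by tracking, along the run of Algorithm~\ref{alg:alg1}, the very good facilities of each colour, exploiting that this is a fixed set that is only ever consumed and that, by Lemma~\ref{lem:lem1}, there are always enough very good facilities in total. The key invariant is that a bad group is created only by exhausting one colour of very good facilities, and since such facilities are never replenished, that colour stays exhausted; hence all bad groups are formed after the same colour---say blue---is gone. A short case analysis on the representative $\oi$ then gives the clean dichotomy that, once blue is exhausted, a group is bad \emph{iff} it is blue-deficient: any surplus of blue optimum facilities in $\phi^{-1}(\oi)$ cannot be matched (no blue very good remain), the red balance can always be completed from the sufficient supply of red very good facilities, and in every case where blue is not deficient the group is in fact balanced ($\delta=0$) or good ($\delta=-1$, a blue representative with no blue optimum facilities) rather than bad.

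The same bookkeeping simultaneously pins down the leftover colour, closing the loop with the second paragraph: since all bad groups are blue-deficient, $\sum_G \delta_G\ge 0$, so by the global identity $g_r-g_b+\sum_G\delta_G=0$ (with $g_r,g_b$ the counts of red- and blue-representative good groups) we get $g_b\ge g_r$, i.e. the leftover good groups are indeed blue-representative, exactly matching the sign needed. The red-exhausted case is symmetric. I would present this invariant as an induction over the iterations of Algorithm~\ref{alg:alg1}, with the only delicate point being the verification that, after a colour is exhausted, the opposite-colour supply always suffices to fill each $X$ to size $|\phi^{-1}(\oi)|-1$; this follows from the total count in Lemma~\ref{lem:lem1}, since the excess demand created by the exhausted colour never forces a shortage in the remaining colour.
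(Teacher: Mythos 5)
Your proposal is correct and follows essentially the same route as the paper: both arguments rest on the additivity of blue deficiency, the observation that balanced groups contribute $0$, good groups contribute $\pm 1$ according to representative colour, and bad groups all have deficiency of one fixed sign because the colour of very good facilities exhausted when the first bad group is formed stays exhausted thereafter, with the global zero-deficiency identity then forcing the matching to succeed and to consume all remaining groups. Your write-up is somewhat more explicit than the paper's about why the leftover good groups' representative colour must agree with the sign of the bad groups' deficiency, but this is a refinement of the same argument rather than a different one.
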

\begin{proof}
Suppose, without loss of generality, that the blue very good facilities were used up the first time a bad group was formed in Algorithm \ref{alg:alg1}.
Thus, for every bad group $G' \in \mathcal G$ we have $|G \cap B| < |G \cap B^*|$.

Let $G$ be a group considered in some iteration of the last loop in Algorithm \ref{alg:alg2}.
As observed above, the blue deficiency of $G$ is strictly positive.

The blue deficiency of the union of all groups in $\mathcal G'$ is 0 because we have only removed
blocks from $\mathcal G'$ up to this point and, by definition, a block has blue deficiency 0. Thus, there must be some other group $G' \in \mathcal G'$ with strictly negative blue deficiency.
It cannot be that $G'$ is bad, otherwise it has nonnegative blue deficiency. It also cannot be that $G'$ is balanced or that it is a good group with a red representative,
because such blocks also have nonnegative blue deficiency.

Therefore, $G'$ must be good with a blue representative.
Good blocks with blue representatives have blue deficiency exactly -1. Add this $G'$ to $\mathcal G_0$. Iterating this argument with $G' + \mathcal G_0$, we add
good groups with blue representatives to $\mathcal G_0$ until $G' + \mathcal G_0$ is a block.
The layout in Figure \ref{fig:blocks} depicts a block constructed in this manner, the leftmost group in the figure is the bad group $G$ and the remaining groups
form $\mathcal G_0$.

After the last loop there are no groups with bad representatives or balanced groups.
Furthermore, all good groups must have the same colour of representative by the second loop.
If there were any good group, the blue deficiency of the union of groups in $\mathcal G'$ would then be nonzero, so there cannot be any good groups left.
That is, at the end of the last loop there are no more groups in $\mathcal G'$.
\end{proof}

\subsection{Standard Bounds}

Before delving into the analysis we note the following two bounds. The first has been used extensively in local search analysis and was first
proven in \cite{AGKMMP04} and the
second was proven in \cite{HKK12}. For convenience we will include the proofs here.
\begin{lemma}\label{lem:standard1}
For any $j \in C$, $d(j, \phi(o_j)) - c_j \leq 2c^*_j$.

\end{lemma}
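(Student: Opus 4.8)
The plan is to prove the inequality $d(j,\phi(o_j)) - c_j \leq 2c^*_j$ using the triangle inequality together with the defining properties of $\phi$, $s_j$, and $o_j$. Recall the setup: $o_j$ is the optimum facility closest to $j$, so $c^*_j = d(j,o_j)$; the facility $s_j$ is the local-optimum facility closest to $j$, so $c_j = d(j,s_j)$; and $\phi(o_j)$ is the facility in $S$ that is closest to $o_j$. The key observation is that because $\phi$ maps each optimum facility to its \emph{nearest} facility in $S$, the distance $d(o_j, \phi(o_j))$ is no larger than the distance from $o_j$ to any other facility in $S$, in particular to $s_j$.

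The main chain of inequalities I would write is the following. First, by the triangle inequality along the path $j \to o_j \to \phi(o_j)$,
\[
d(j, \phi(o_j)) \leq d(j, o_j) + d(o_j, \phi(o_j)).
\]
Next, since $\phi(o_j)$ is the closest facility in $S$ to $o_j$ and $s_j \in S$, we have $d(o_j, \phi(o_j)) \leq d(o_j, s_j)$. Applying the triangle inequality once more along $o_j \to j \to s_j$ gives $d(o_j, s_j) \leq d(o_j, j) + d(j, s_j) = c^*_j + c_j$. Combining these,
\[
d(j, \phi(o_j)) \leq c^*_j + (c^*_j + c_j) = 2c^*_j + c_j,
\]
and rearranging yields the claim $d(j, \phi(o_j)) - c_j \leq 2c^*_j$.

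There is essentially no hard step here; this is a standard ``reroute through the optimum'' bound. The only point requiring a moment of care is invoking the minimality of $\phi(o_j)$ correctly: one must use that $\phi$ was defined to send each $i^* \in O$ to its nearest facility in $S$, so that substituting the arbitrary competitor $s_j$ can only increase the distance. I would make sure to state explicitly that $s_j \in S$ so that $d(o_j,\phi(o_j)) \leq d(o_j,s_j)$ is justified. No case analysis on the colours of the facilities or on the block structure is needed, since the inequality is purely metric and holds for every client $j$ individually.
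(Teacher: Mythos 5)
Your proof is correct and follows exactly the same chain of inequalities as the paper: triangle inequality from $j$ through $o_j$, then the minimality of $\phi(o_j)$ over $S$ (using $s_j \in S$ as the competitor), then the triangle inequality again. Nothing further is needed.
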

\begin{proof}
By the triangle inequality and the definition of $\phi$, we have
\[ d(j,\phi(o_j)) \leq d(j, o_j) + d(o_j, \phi(o_j)) \leq c^*_j + d(o_j, s_j) \leq 2c^*_j + c_j. \]
\end{proof}

\begin{lemma}\label{lem:standard2}
For any $j \in C$, $d(j, \cent(\phi(o_j))) - c_j \leq 3c^*_j + c_j$.
\end{lemma}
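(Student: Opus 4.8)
The plan is to bound $d(j, \cent(\phi(o_j)))$ by building a short path from $j$ to $\cent(\phi(o_j))$ using the triangle inequality, exactly in the spirit of Lemma~\ref{lem:standard1}. The key observation is that $\cent(\phi(o_j))$ is, by definition, the facility in $\phi^{-1}(\phi(o_j))$ closest to the local optimum facility $\phi(o_j)$. Since $o_j \in \phi^{-1}(\phi(o_j))$ as well, we have the crucial inequality $d(\phi(o_j), \cent(\phi(o_j))) \leq d(\phi(o_j), o_j)$. This lets us avoid paying for the (potentially large) distance to $\cent(\phi(o_j))$ directly and instead route through $o_j$.

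First I would write
\[
d(j, \cent(\phi(o_j))) \leq d(j, \phi(o_j)) + d(\phi(o_j), \cent(\phi(o_j))).
\]
Then I would apply the defining property of $\cent$ noted above to replace the second term, giving
\[
d(j, \cent(\phi(o_j))) \leq d(j, \phi(o_j)) + d(\phi(o_j), o_j).
\]
Next I would bound $d(\phi(o_j), o_j)$ by going through $j$: by the definition of $\phi$ (mapping $o_j$ to its nearest facility in $S$, so $d(o_j, \phi(o_j)) \leq d(o_j, s_j)$) and the triangle inequality, $d(\phi(o_j), o_j) \leq d(o_j, s_j) \leq d(o_j, j) + d(j, s_j) = c^*_j + c_j$. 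For the first term $d(j, \phi(o_j))$, I would invoke Lemma~\ref{lem:standard1} directly, which gives $d(j, \phi(o_j)) \leq 2c^*_j + c_j$.

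Combining these two bounds yields
\[
d(j, \cent(\phi(o_j))) \leq (2c^*_j + c_j) + (c^*_j + c_j) = 3c^*_j + 2c_j,
\]
so that $d(j, \cent(\phi(o_j))) - c_j \leq 3c^*_j + c_j$, as claimed. I do not anticipate a genuine obstacle here: the statement is a routine triangle-inequality chain and the only subtlety is recognizing that one should route through $o_j$ to exploit the minimality defining $\cent$, rather than trying to bound the distance to $\cent(\phi(o_j))$ head-on. The main thing to be careful about is correctly chaining the two applications of the triangle inequality and reusing Lemma~\ref{lem:standard1} as a black box for the $d(j,\phi(o_j))$ portion rather than re-deriving it.
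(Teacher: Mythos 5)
Your proof is correct and follows essentially the same route as the paper's: split off $d(j,\phi(o_j))$ via the triangle inequality and handle it with Lemma~\ref{lem:standard1}, then bound $d(\phi(o_j),\cent(\phi(o_j)))$ by $d(\phi(o_j),o_j)\leq d(s_j,o_j)\leq c^*_j+c_j$ using the minimality in the definitions of $\cent$ and $\phi$. Nothing further is needed.
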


\begin{proof}
By the triangle inequality and Lemma \ref{lem:standard1}, it suffices to prove $d(\phi(o_j), \cent(\phi(o_j))) \leq c^*_j + c_j$.
By definition of $\cent()$ and $\phi$,
\[ d(\phi(o_j), \cent(\phi(o_j))) \leq d(\phi(o_j), o_j) \leq d(s_j, o_j) \leq c^*_j + c_j. \]
\end{proof}

Finally, we often consider operations that add or remove a single item from a set. To keep the notation cleaner, we let $S + i$ and $S - i$ refer to $S \cup \{i\}$ and $S - \{i\}$, respectively,
for sets $S$ and items $i$.


\section{Multiswap Analysis} \label{sec:multiswap}
Recall that we are assuming $S = R \cup B$ is a locally optimum solution with respect to the heuristic that swaps at most $p$
facilities of each colour and that $O = R^* \cup B^*$ is some globally optimum solution.
We assume $p = t^2+1$ for some sufficiently large integer $t$.

Focus on a single block $T$.
For brevity, let $T^*_R = T \cap R^*$ and $T^*_B = T \cap B^*$ denote the red and blue facilities from the optimum solution
in $T$. Similarly let $T_R = T \cap R$ and $T_B = T \cap B$ denote the red and blue facilities from the local optimum solution
in $T$. The main goal of this section is to demonstrate the following inequality for group $T$

\begin{theorem}\label{thm:local_block}
For some absolute constant $\gamma$ that is independent of $t$, we have
\[ 0 \leq \sum_{j \in N^*(T^*_R \cup T^*_B)} \left[\left(1 + \frac{\gamma}{t}\right)c^*_j - c_j\right] +
\sum_{j \in N(T_R \cup T_B)} \left[\left(4 + \frac{\gamma}{t}\right) \cdot c^*_j + \frac{\gamma}{t}\cdot c_j\right]. \]
\end{theorem}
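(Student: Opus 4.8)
The plan is to prove Theorem \ref{thm:local_block} by the standard local-search strategy: exhibit a carefully weighted family of feasible \emph{test swaps} inside the block $T$, invoke local optimality of $S$ on each (so the cost change is nonnegative), bound the cost change of each swap by a client-reassignment cost, and finally sum the resulting inequalities with their weights so that the per-client coefficients match the claimed bound. Throughout, a test swap is a pair $(A,B')$ with $A \subseteq T \cap S$ and $B' \subseteq T \cap O$ that is colour-balanced ($|A \cap R| = |B' \cap R^*|$ and $|A \cap B| = |B' \cap B^*|$) and swaps at most $p = t^2+1$ facilities of each colour; local optimality gives $0 \leq \cost((S \setminus A) \cup B') - \cost(S)$.

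To bound each cost change I will use three reassignment rules for clients whose facility $s_j$ is swapped out. A client with $o_j \in B'$ reassigns to $o_j$, contributing $c^*_j - c_j$. A client with $o_j \notin B'$ but $\phi(o_j)$ still open reassigns to $\phi(o_j)$; by Lemma \ref{lem:standard1} this contributes at most $2c^*_j$ and introduces no positive dependence on $c_j$. Only when $\phi(o_j)$ is itself swapped out (which will happen essentially only for the leader) do I fall back on $\cent(\phi(o_j))$, contributing at most $3c^*_j + c_j$ by Lemma \ref{lem:standard2}; this is the single source of positive $c_j$-dependence, so the whole construction is organised to keep its total weight $O(1/t)$. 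The block properties of Lemma \ref{lem:block_prop} guarantee these targets lie in $T$ and are open when needed: since $\phi^{-1}(i) \subseteq T$ for every $i \in T \cap S$, a client with $o_j \notin T \cap O$ has $\phi(o_j) \notin T \cap S$ and so is untouched by any swap confined to the block, and clients with $s_j \notin T\cap S$ are unaffected and contribute nonpositively.

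The construction of the weighted swaps is the heart of the argument and where the colour constraints bite. For a small block, in which both $|T \cap R^*| \leq p$ and $|T \cap B^*| \leq p$, the single swap $(T \cap S, T \cap O)$ is feasible and already yields the bound with coefficient $2$ (not $4$) on $c^*_j$ and no positive $c_j$ term. For a large block I will chop $T \cap O$ into roughly $|T \cap O|/t$ chunks of size about $t$ in each colour and build one swap per chunk, swapping in the chunk and swapping out a colour-matched set of local facilities, preferring the very good facilities (whose clients reassign at cost $2c^*_j$) as the swapped-out set. The leader $\oi$ and the good-group representatives --- whose clients can only be reassigned through the expensive $\cent$ rule --- are swapped out in only an $O(1/t)$ fraction of the chunks, with $\cent(\oi) \in \phi^{-1}(\oi)$ simultaneously swapped in so the fallback reassignment is valid. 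Averaging over chunk choices makes each $O$-facility swapped in with total weight exactly $1$ and each $S$-facility swapped out with total weight at most $2 + O(1/t)$; the factor of $2$ arises because the non-leader facilities must absorb both their own swap-in obligations and the colour-balancing demanded by the leader's captured optimum facilities. For good and bad blocks I will use the pairing of good groups guaranteed by Lemmas \ref{lem:lem1} and \ref{lem:app2} (two good groups of opposite colour, or a bad group completed by good groups that fix its blue deficiency) so the $\pm 1$ colour imbalances cancel across the combined swaps.

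Summing the weighted inequalities then reads off the coefficients: each client in $N^*(T^*_R \cup T^*_B)$ sees its $o_j$ swapped in with weight $1$, giving $c^*_j - c_j$ plus an additional $O(1/t)\,c^*_j$ slack from the boundary chunks; each client in $N(T_R \cup T_B)$ has its facility swapped out with weight at most $2$ and is reassigned at cost $2c^*_j$ each time, producing the coefficient $4$ on $c^*_j$, while the $O(1/t)$-weight $\cent$-reassignments contribute the $\frac{\gamma}{t}c_j$ term. I expect the main obstacle to be exactly the colour-balancing in large blocks: in plain \km one may pair any swapped-in optimum facility with any cheaply-reassignable very good facility, but here swapping in a red optimum facility forces swapping out a red local facility, and when the very good facilities are concentrated in one colour one is forced to swap out the leader or the good representatives. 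Keeping their swap-out weight down to $O(1/t)$ while still achieving swap-in weight exactly $1$ for every optimum facility and respecting the per-colour budget $p$ is the delicate accounting that the block decomposition was engineered to make possible, and it is precisely what drives the extent from $1$ (as in \km) up to $2$.
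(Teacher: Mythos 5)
There is a genuine gap in your plan, and it is precisely the point where this problem departs from plain \km. You assert that the fallback reassignment to $\cent(\phi(o_j))$ --- the only source of a positive $c_j$ term --- ``will happen essentially only for the leader'' and that its total weight can be kept to $O(1/t)$. That is false in the hard configurations. Consider a block in which every blue local facility in $T_B - \oi$ has $\deg > 0$ (the set the paper calls $\overline B$), each capturing many facilities of $T^*_R$. To give every $i^*_b \in T^*_B$ swap-in weight $1$ you must close blue local facilities with total weight $|T^*_B| = |T_B|$, i.e.\ each $i_b \in \overline B$ is closed with weight roughly $1$, not $O(1/t)$; and since you cannot afford to open all of $\phi^{-1}(i_b)$ (it can have size up to $|T^*_R|$), every client $j$ with $s_j = i_b$, $\phi(o_j) = i_b$ and $o_j \neq \cent(i_b)$ is forced through the $3c^*_j + c_j$ bound with \emph{constant} total weight. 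There are no very good blue facilities to ``prefer'' in this scenario, so your averaging has nothing to average over. With these $+c_j$ terms carrying weight $\Theta(1)$, your summed inequality has the form $0 \leq \sum (\cdot)c^*_j - \sum(1-o(1))c_j + \sum_{j \in C_{bad}} c_j$ and does not imply the theorem.

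The missing idea is the paper's cancellation step: after the first round of swaps produces $\sum_{j \in C_{bad}} (3c^*_j + c_j)$ with constant weight, one runs a \emph{second} round of swaps (swapping in each $i^*_r \in T^*_R - \cent(\overline B)$ against a random degree-zero red facility) in whose analysis only the clients of $C_{bad}$ are rerouted to their optimum facility, yielding $\sum_{j \in C_{bad}}(c^*_j - c_j) + \sum_{j} 2c^*_j$-type terms whose $-c_j$ cancels the offending $+c_j$. This is exactly what pushes the coefficient on $c^*_j$ from $2$ to $4$ (each bad client pays $3c^*_j + c^*_j$; each ok client is swapped out twice at $2c^*_j$ each), and it is the stated reason the ratio is $5$ rather than $3$. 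Your attribution of the extent-$2$ swap-out purely to colour-balancing against the leader misidentifies the mechanism: the leader is indeed protected by an $O(1/t)$ averaging argument (and that part of your plan is fine), but the doubling comes from the extra cancellation round, which your proposal does not contain. Secondarily, note the paper's case analysis is driven by which of $|T^*_R|, |T^*_B|$ exceed the swap budget (with asymmetric thresholds $t^2$ and $t$), not merely by ``small vs.\ large block''; the asymmetry is what lets the red cancellation swaps close blue facilities' centers without blowing the per-colour budget.
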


We first show the simple details of how this yields our main result.

\begin{pproof}{Theorem \ref{thm:main}}
Summing the inequalities stated in Theorem \ref{thm:local_block} over all blocks $T$, we see
\[ 0 \leq \sum_{j \in C} \left(5 + \frac{2\gamma}{t}\right) \cdot c^*_j - \left(1 - \frac{\gamma}{t}\right) \cdot c_j. \]
Multiplying through by $\frac{t}{t-\gamma}$, when $t > \gamma$ this shows
\[ 0 \leq \sum_{j \in C} \frac{5t + 2\gamma}{t-\gamma} \cdot c^*_j - c_j. \]
Rearranging, we see
\[ \cost(S) = \sum_{j \in C} c_j \leq \sum_{j \in C} \left(5 + \frac{7\gamma}{t-\gamma}\right) \cdot c^*_j = \left(5 + \frac{7\gamma}{t-\gamma}\right) \cdot \cost(S^*). \]
Recall that $p = t^2 + 1$ where $p$ is the number of swaps considered in the local search algorithm. Thus, Algorithm \ref{alg:alg_main} is a $(5+O(1/\sqrt p))$-approximation.
\end{pproof}

The analysis breaks into a number of cases based on whether $T^*_R$ and/or $T^*_B$ are large.
In each of the cases, we use the following notation and assumptions.
Let $\oi$ denote the leader in $T$. Without loss of generality, assume all other $i \in T_B \cup T_R$ with $\deg(i) > 0$
are blue facilities. Let $\overline B = \{i \in T_B - \oi : \deg(i) > 0\}$. Figure \ref{fig:blocks} illustrates this notation.

\begin{figure}
\includegraphics[width=1.0\textwidth]{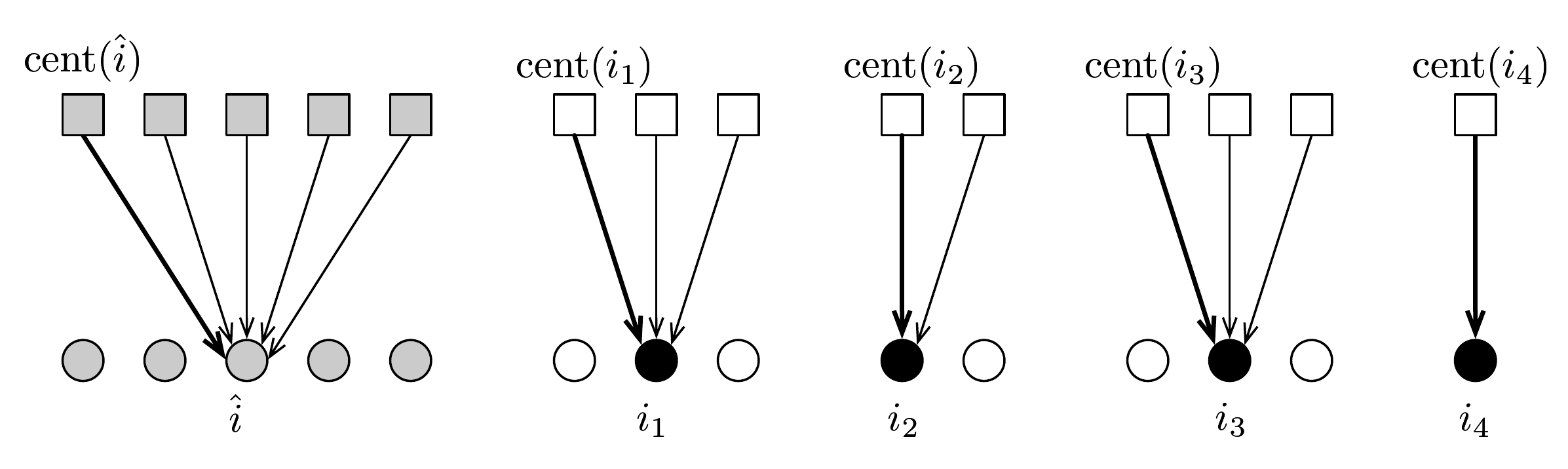}
\caption{Illustration of a block $T$. The facilities on the top are in $T \cap O$ and the facilities on the bottom are in $T \cap S$.
The directed edges depict $\phi$, and the thick edges connect $\cent(i)$ to $i$. The facilities coloured black lie in $\mB$, the
facilities coloured white lie in $\mR$, and the facilities coloured grey could either lie in $\mB$ or $\mR$. Note that $\overline B = \{i_1, i_2, i_3, i_4\}$.
The layout of the figure is suggestive of how the block was constructed by adding ``good'' groups to the initial bad group in Algorithm \ref{alg:alg2}} \label{fig:blocks}
\end{figure}

The swaps we consider in these cases are quite varied, but we always ensure we swap in $\cent(i)$ whenever some $i \in S \cap T$ with $\deg(i) > 0$ is swapped out.
This way, we can always bound the reassignment cost of each client $j$ by using either Lemma \ref{lem:standard1} or Lemma \ref{lem:standard2}.


\subsection{Case $|T^*_R| \leq t^2, |T^*_B| \leq t$}
In this case, we simply swap out all of $T_R \cup T_B$ and swap in all of $T^*_R \cup T^*_B$. Because $R \cup B$ is a locally optimum solution
and because this swaps at most $t^2$ facilities of each colour, we have.
\[ 0 \leq \cost(S \cup (T^*_R \cup T^*_B) - (T_R \cup T_B)) - \cost(S) \]
Of course, after the swap each client will move to its nearest open facility.
As is typical in local search analysis, we explicitly describe a (possibly suboptimal) reassignment of clients to facilities to upper bound this cost change.

Each $j \in N^*(T^*_R \cup T^*_B)$ is moved from $s_j$ to $o_j$ which incurs an assignment cost change of exactly $c^*_j - c_j$.
Each $j \in N(T_R \cup T_B) - N^*(T^*_R \cup T^*_B)$ is moved to $\phi(o_j)$. Note that $\phi(o_j) \not\in T$ so it remains open after the swap.
By Lemma \ref{lem:standard1}, the assignment cost change is bounded by $2c^*_j$.
Every other client $j$ that has not already been reassigned remains at $s_j$ and incurs no assignment cost change.
Thus,
\[ 0 \leq \sum_{j \in N^*(T^*_R \cup T^*_B)} (c^*_j - c_j) + \sum_{j \in N(T_R \cup T_B)} 2c^*_j \]
which is even better than what we are required to show for Theorem \ref{thm:local_block}.

We note that the analysis Section \ref{subsec:smallbig} could subsume this analysis (with a worse constant), but we have included it here
anyway to provide a gentle introduction to some of the simpler aspects of our approach.

\subsection{Case $|T^*_R| \geq t^2+1, |T^*_B| \geq t+1$}\label{subsec:bigbig}

We start by briefly discussing some challenges in this case. In the worst case, all of the $i_b \in T_B$ have $\deg(i)$ being very large. The issue here
is that we need to swap in each $i^*_b \in T^*_B$ in order to generate terms of the form $c^*_j - c_j$ for $j$ with $o_j = i^*_b$. But this requires
us to swap out some $i_b$. Since we do not have enough swaps to simply swap in all of $\phi^{-1}(i_b)$, we simply swap in $\cent(i_b)$.

Any client $j$ with $s_j$ being closed and $o_j \in \phi^{-1}(i_b) - \cent(i_b)$ cannot be reassigned to $\phi(o_j)$, so we send it to $\cent(\phi(o_j))$
and use Lemma \ref{lem:standard2} to bound the reassignment cost. This leaves a term of the form $+c_j$, so we have to consider additional
swaps involving $-c_j$ to cancel this out. These additional swaps cause us to lose a factor of roughly 5 instead of 3.

Another smaller challenge is that we do not want to swap out the leader $\oi \in T \cap S$ for a variety of technical reasons. However, since $|T^*_R|$ and
$|T^*_B|$ are both big, this is not a problem. When we swap out some $i^* \in T \cap O$, we will just swap in a randomly chosen facility in $T \cap S - \oi$ of the same colour.
The probability any particular facility is swapped in this way is very small. Ultimately, each facility in $T \cap S$ will be swapped out $2 + O(1/t)$ times in expectation.

To be precise, we partition the set of clients in $N(T_R \cup T_B)$ into two groups:
\[ C_{bad} := N(\overline B) \cap N^*(T^*_R - \cent(\overline B)) {\rm ~~~~and~~~~} C_{ok} := N(T_R \cup T_B - \oi) - C_{bad}. \]
We omit $N(\oi)$ from $C_{ok}$ because we will never close $\oi$ in this case.

The first group is dubbed {\em bad} because there may be a swap where both $s_j$ and $\phi(o_j)$ are closed yet $o_j$ is not opened
so we can only use Lemma \ref{lem:standard2} to bound their reassignment cost.
In fact, some clients $j \in C_{ok}$ may also be involved in such a swap, but we are able to use an averaging argument
for these clients to show that the resulting $+s_j$ term from using Lemma \ref{lem:standard2} appears with negligible weight
and does not need to be cancelled.




We consider the following two types of swaps to generate our initial inequality. 
\begin{itemize}
\item For each $i^*_b \in T^*_B$, choose a random $i_b \in T_B - \oi$. If $i_b \not\in \overline B$ (i.e. $\deg(i_b) = 0$) then simply swap out $i_b$ and swap in $i^*_b$.
If $i_b \in \overline B$ then swap out $i_b$ and a random $i_r \in T_R - \oi$ and swap in $i^*_b$ and $\cent(i_b)$.


\item For each $i^*_r \in T^*_R - \cent(\overline B)$, swap in $i^*_r$ and swap out a randomly chosen $i_r\in T_R - \oi$.
\end{itemize}
By choosing facilities at ``random'', we mean uniformly at random from the given set and this should be done independently for each invokation of the swap.

\begin{lemma}\label{lem:firststep}
\[ 0 \leq \sum_{j \in N^*(T^*_B \cup T^*_R)} \left( \frac{t+1}{t} \cdot c^*_j - c_j\right) +
\sum_{j \in C_{ok}} \left[\left(2 + \frac{5}{t}\right) c^*_j + \frac{1}{t} c_j\right]  + \frac{t+1}{t} \sum_{j \in C_{bad}} (3c^*_j + c_j). \]
\end{lemma}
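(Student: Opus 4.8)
The plan is to set up the local-search inequality generated by the two families of swaps described just before the lemma, then account for the reassignment cost incurred by each client when its serving facility is closed. Since $S$ is a local optimum and every listed swap closes at most $t^2+1 = p$ facilities of each colour (we swap in one $i^*_b$ or $i^*_r$ at a time, possibly together with one $\cent(i_b)$, and close at most two facilities of $T_R \cup T_B - \oi$ per swap), each individual swap yields an inequality $0 \leq \cost(S \cup \{\text{swapped in}\} - \{\text{swapped out}\}) - \cost(S)$. As is standard, I would bound each such cost difference by exhibiting an explicit (suboptimal) reassignment of the affected clients and then sum the inequalities, weighting by the appropriate swap probabilities.

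For the reassignment, I would treat the three relevant client populations separately. A client $j \in N^*(T^*_B \cup T^*_R)$ is served by some $o_j$ that gets swapped in during the corresponding swap, so I move $j$ to $o_j$, incurring a cost change of exactly $c^*_j - c_j$; summing over the relevant swaps and dividing by the number of randomization outcomes produces the $\frac{t+1}{t} c^*_j - c_j$ term once I track how often the swap that opens $o_j$ actually fires (the $\frac{t+1}{t}$ factor arises from the $|T^*_B| \geq t+1$ and $|T^*_R| \geq t^2+1$ denominators in the uniform random choices). For a client $j$ whose $s_j$ is closed but whose $o_j$ is not opened, I fall back on reassigning $j$ to $\phi(o_j)$ when that facility is still open (bounding the cost by $2c^*_j$ via Lemma~\ref{lem:standard1}), and otherwise to $\cent(\phi(o_j))$ (bounding by $3c^*_j + c_j$ via Lemma~\ref{lem:standard2}). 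The partition into $C_{ok}$ and $C_{bad}$ is exactly engineered so that the clients forced into the weaker Lemma~\ref{lem:standard2} bound are collected in $C_{bad}$, yielding the $\frac{t+1}{t}(3c^*_j + c_j)$ contribution.

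The delicate part, and the place I expect the real work to concentrate, is the averaging over the random choices for clients in $C_{ok}$. A client in $C_{ok}$ may still occasionally find both $s_j$ and $\phi(o_j)$ closed in some random outcome (when $\phi(o_j) = \cent(i_b)$ is not the one opened, or when a needed red facility is closed by the second swap type), so its expected reassignment cost is a convex combination: with the bulk of the probability mass it reassigns cleanly at cost $2c^*_j$, but with probability $O(1/t)$ it must use Lemma~\ref{lem:standard2} and pick up a $(3c^*_j + c_j)$ term. The goal is to show that this low-probability event contributes only $\frac{5}{t} c^*_j + \frac{1}{t} c_j$ in expectation — i.e. the dangerous $+c_j$ term survives only with weight $\frac{1}{t}$. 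I would carefully compute, for each $j \in C_{ok}$, the probability (over the independent uniform choices of $i_b, i_r$) that $j$'s fallback facility is also closed, bound that probability by $O(1/t)$ using $|T_R - \oi| \geq t^2$ and $|T_B - \oi| \geq$ the relevant size, and combine the two regimes.

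Finally I would assemble the three client contributions, verify that the randomization weights normalize so that each $i^*_b$ and each $i^*_r - \cent(\overline B)$ is swapped in exactly once in total (guaranteeing every client in $N^*(T^*_B \cup T^*_R)$ gets its $-c_j$ term), and confirm that no client is double-counted across the two swap families. The main obstacle is thus not any single inequality but the bookkeeping of the expected reassignment cost for $C_{ok}$ under the joint distribution of the random swaps, ensuring the positive $c_j$ dependence is suppressed to weight $\frac{1}{t}$ exactly as stated; this is the technical heart that distinguishes the multiswap analysis from the single-swap argument of \cite{HKK12}.
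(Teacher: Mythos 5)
Your proposal follows essentially the same route as the paper's proof: the same two swap families (swapping in each $i^*_b$ with $\cent(i_b)$ while closing a random $i_b \in T_B - \oi$ and, if needed, a random $i_r \in T_R - \oi$; and swapping each $i^*_r \in T^*_R - \cent(\overline B)$ for a random $i_r \in T_R - \oi$), the same reassignment rules via Lemmas~\ref{lem:standard1} and \ref{lem:standard2}, and the same averaging argument in which the $+c_j$ term for $j \in C_{ok}$ is suppressed to weight $O(1/t)$ because the event that both $s_j$ and $\phi(o_j)$ are closed without $o_j$ opening has probability $\frac{1}{|T_B-\oi|}\cdot\frac{1}{|T_R-\oi|} \leq \frac{1}{t^3}$ per swap. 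The computations you defer (the per-client closure probabilities and the $\beta_B, \beta_R \leq \frac{t+1}{t}$ normalization factors) are exactly the bookkeeping the paper carries out, so the plan is sound and matches the intended argument.
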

\begin{proof}
For brevity, we will let $\beta_R = \frac{|T_R|}{|T_R-\oi|}$ and $\beta_B = \frac{|T_B|}{|T_B - \oi|}$. Note that $\beta_R, \beta_B \leq \frac{t+1}{t}$
and that either $\beta_R = 1$ or $\beta_B = 1$.

First consider a swap of the first type that swaps in $\{i^*_b, \cent(i_b)\}$ and swaps out $\{i_b, i_r\}$ for some $i_b$ with $\deg(i_b) > 0$.
Because $R \cup B$ is a local optimum the cost of the solution does not decrease after performing this swap. We provide
an upper bound on the reassignment cost.

Each $j \in N^*(\{i^*_b, \cent(i_b)\})$ is reassigned from $s_j$ to $o_j$ and incurs an assignment cost change of $c^*_j - c_j$.
Every client $j \in N(\{i_b, i_r\})$ that has not yet been reassigned is first moved to $\phi(o_j)$.
If this $\phi(o_j)$ remains open, assign $j$ to it. By Lemma \ref{lem:standard1}, the assignment cost for $j$ increases by at most $2c^*_j$.
If $\phi(o_j)$ is not open then $\phi(o_j) = i_b$ (because $\deg(i_r) = 0$) so we instead move $j$ to
$\cent(\phi(o_j)) = \cent(i_b)$. Lemma \ref{lem:standard2} shows the assignment cost increases by at most $3c^*_j + c_j$.
This can only happen if $s_j \in \{i_r, i_b\}$ and $\phi(o_j) = i_b$.

Combining these observations and using slight overestimates, we see
\begin{equation}\label{eqn:initial}
0 \leq \sum_{j \in N^*(\{i^*_b, \cent(i_b)\})} (c^*_j - c_j) + \sum_{\substack{j \in N(\{i_b, i_r\}) \\ \phi(o_j) \neq i_b}} 2c^*_j + \sum_{\substack{j \in N(\{i_b, i_r\}) \\ \phi(o_j) = i_b}} (3c^*_j + c_j).
\end{equation}

Now, if the random choice for $i_b$ in the swap has $\deg(i_b) = 0$, then swapping $\{i_b\}$ out and $\{i^*_b\}$ in
generates an even simpler inequality:
\begin{equation}\label{eqn:initial2}
0 \leq \sum_{j \in N^*(i^*_b)} (c^*_j - c_j) + \sum_{j \in N(i_b)} 2c^*_j.
\end{equation}
To see this, just reassign each $j \in N^*(i^*_b)$ from $s_j$ to $o_j$ and reassign the remaining $j \in N(i_b)$ from $s_j$ to $\phi(o_j)$
(which remains open because $\deg(i_b) = 0$) and use Lemma \ref{lem:standard1}.
%
%
%

Consider the expected inequality that is generated for this fixed $i^*_b$. We start with some useful facts
that follow straight from the definitions and the swap we just performed.
\begin{itemize}
\item Any $j \in N^*(\cent(\overline B))$ has $o_j$ open with probability $\frac{1}{|T_B-\oi|}$.
\item Any $j \in C_{bad}$ has $s_j$ being closed with probability $\frac{1}{|T_B-\oi|}$.
\item Any $j \in C_{ok} - N(T_R)$ has $s_j$ being closed with probability $\frac{1}{|T_B-\oi|}$. When this happens, if $o_j$ is not opened
then $\phi(o_j)$ must be open. \vspace{1mm}\\
That is, $j \in C_{ok}$ means $o_j \in T^*_B \cup \cent(\overline B)$. If $o_j \in T^*_B$ then $\phi(o_j) = \oi$ (by the structure of block $T$)
which remains open. If $o_j \in \cent(\overline B)$ then either $\phi(o_j)$ was not closed, or else $\cent(\phi(o_j)) = o_j$ was opened.
\item Any $j \in C_{ok} \cap N(T_R)$ has $s_j$ being closed with probability $\frac{|\overline B|}{|T_B-\oi|} \cdot \frac{1}{|T_R-\oi|}$.
If $o_j$ and $\phi(o_j)$ are closed, then we move $j$ to $\cent(\phi(o_j))$. However, this can only happen with probability
$\frac{1}{|T_B-\oi|}\cdot\frac{1}{|T_R - \oi|}$ since it must be that $\phi(o_j)$ is the blue facility that was randomly chosen to be closed.
\end{itemize}
Averaging \eqref{eqn:initial} over all random choices and using some slight overestimates we see
%
\begin{eqnarray*}
0 & \leq & \sum_{j \in N^*(i^*_r)} (c^*_j - c_j) + \frac{1}{|T_B - \oi|} \cdot \sum_{j \in N^*(\cent(\overline B))} (c^*_j - c_j) \\
& &  + \frac{1}{|T_B-\oi|} \left[\sum_{j \in C_{bad}} (3c^*_j + c_j) + \sum_{j \in C_{ok}-N(T_R)} 2c^*_j\right] \\
& & + \frac{1}{|T_B-\oi|}\cdot \frac{1}{|T_R-\oi|} \sum_{j \in C_{ok} \cap N(T_R)} (|\overline B| 2c^*_j + 3c^*_j + c_j).
\end{eqnarray*}
Summing over all $i^*_b \in N^*(T^*_B)$ shows
\begin{eqnarray}
0 & \leq & \sum_{j \in N^*(T^*_B)} (c^*_j - c_j) + \beta_B \cdot \sum_{j \in N^*(\cent(\overline B))} (c^*_j - c_j) \label{eqn:longer} \\
& & + \beta_B \cdot \left[\sum_{j \in C_{bad}} (3c^*_j + c_j) + \sum_{j \in C_{ok} - N(T_R)} 2c^*_j\right] + \frac{\beta_B}{|T_R-\oi|} \cdot \sum_{j \in C_{ok} \cap N(T_R)} ((2 |\overline B| + 3)c^*_j + c_j). \notag
\end{eqnarray}

Next, consider the second type of swap that swaps in some $i^*_r \in T^*_R - \cent(\overline B)$ and swaps out some randomly chosen $i_r \in T_R - \oi$.
Over all such swaps, the expected number of times each $i_r \in T_R-\oi$ is swapped out is $\frac{|T^*_R| - |\overline B|}{|T_R-\oi|} = \beta_R - \frac{|\overline B|}{|T_R-\oi|}$.
In each such swap, we reassign $j \in N^*(i^*_r)$ from $s_j$ to $o_j$ and every other $j \in N(i_r)$ from $s_j$ to $\phi(o_j)$ which is still open because $\deg(i_r) = 0$.
Thus,
\[ 0 \leq \sum_{j \in N^*(T^*_R - \cent(\overline B))} (c^*_j - c_j) + \left(\beta_R - \frac{|\overline B|}{|T_R -\oi|}\right) \cdot \sum_{j \in C_{ok} \cap N(T_R)} 2c^*_j \]
Scaling this bound by $\beta_B$, adding it to (\ref{eqn:longer}), and recalling $|T_R| \geq t^2$ shows
\begin{eqnarray*}
0 & \leq &  \sum_{j \in N^*(T^*_B)} (c^*_j - c_j) + \beta_B \cdot \sum_{j \in N^*(T_R)} (c^*_j - c_j) \\
& & + \beta_B \cdot \left[\sum_{j \in C_{bad}}(3c^*_j + c_j) + \sum_{j \in C_{ok} - N(T_R) } 2c^*_j +  \right]
+ \beta_B \cdot \sum_{j \in C_{ok} \cap N(T_R)} \left[\left(2\beta_R + \frac{3}{t^2}\right)\cdot c^*_j + \frac{1}{t^2}\cdot c_j\right].
\end{eqnarray*}
Recall that $\beta_B, \beta_R \leq \frac{t+1}{t}$ and also $\beta_B \cdot \beta_R \leq \frac{t+1}{t}$ to complete the proof of Lemma \ref{lem:firststep}.
\end{proof}

Our next step is to cancel terms of the form $+c_j$ in the bound from Lemma \ref{lem:firststep} for $j \in C_{bad}$. To do this, we again perform the second type of swap for each $i \in T^*_R - \cent(\overline B)$
but reassign clients a bit differently in the analysis.
\begin{lemma} \label{lem:cancel_1}
\[ 0 \leq \sum_{j \in C_{bad}} (c^*_j - c_j) + \frac{t+1}{t} \cdot \sum_{j \in C_{ok} \cap N(T_R)} 2c^*_j \]
\end{lemma}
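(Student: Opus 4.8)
The plan is to reuse exactly the second type of swap from the proof of Lemma~\ref{lem:firststep}: for each $i^*_r \in T^*_R - \cent(\overline B)$ we swap in $i^*_r$ and swap out a uniformly random $i_r \in T_R - \oi$. Only the client reassignment used to upper bound the post-swap cost changes. The key new observation is that each $j \in C_{bad}$ has $o_j \in T^*_R - \cent(\overline B)$, so $o_j$ is opened by exactly one swap in this family---the one with $i^*_r = o_j$---and in that swap I would reassign $j$ from $s_j$ to $o_j$, paying $c^*_j - c_j$. This is legitimate even though $s_j \in \overline B$ is not closed: we are only exhibiting an explicit assignment of clients to open facilities in order to upper bound $\cost(S')$, and local optimality still gives $0 \le \cost(S') - \cost(S)$.

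For the single swap that opens $i^*_r$ and closes $i_r$, I would assign clients as follows. Each $j \in C_{bad}$ with $o_j = i^*_r$ goes to $o_j$, contributing $c^*_j - c_j$. Each $j \in N(i_r)$ must move because $s_j = i_r$ is closed; since $\deg(i_r) = 0$ (the only positive-degree non-leader facilities are blue), $\phi(o_j) \ne i_r$ is still open, so I would send $j$ to $\phi(o_j)$ and bound its cost increase by $2c^*_j$ via Lemma~\ref{lem:standard1}. All remaining clients stay at their still-open facility $s_j$. Because $C_{bad}$ clients have blue $s_j \in \overline B$ while $i_r$ is red, the two reassigned groups are disjoint and every target facility lies in $S - i_r + i^*_r$, so local optimality yields
\[
0 \;\le\; \sum_{\substack{j \in C_{bad} \\ o_j = i^*_r}} (c^*_j - c_j) \;+\; \sum_{j \in N(i_r)} 2c^*_j.
\]

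I would then take the expectation over the random $i_r$ for each fixed $i^*_r$ and sum over all $i^*_r \in T^*_R - \cent(\overline B)$. The first term does not depend on $i_r$, and summing over $i^*_r$ collects each $j \in C_{bad}$ exactly once (at the swap $i^*_r = o_j$), giving $\sum_{j \in C_{bad}}(c^*_j - c_j)$. For the second term I would use the identity $N(T_R - \oi) = C_{ok} \cap N(T_R)$, which holds because every such client has red $s_j$ and so is excluded from $C_{bad}$; averaging then places total weight $\frac{|T^*_R| - |\overline B|}{|T_R - \oi|}$ on each client of $C_{ok} \cap N(T_R)$. Invoking the block property $|T_R| = |T^*_R|$, this weight is at most $\beta_R = \frac{|T_R|}{|T_R - \oi|} \le \frac{t+1}{t}$, which is exactly the factor in the claimed inequality.

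The one place I would be most careful is the bookkeeping: verifying that each $j \in C_{bad}$ is charged $c^*_j - c_j$ once and only once, and that it never coincides with a forcibly reassigned client of $N(i_r)$. Both follow from $o_j \in T^*_R - \cent(\overline B)$ being a single swapped-in facility, together with the colour dichotomy (blue $s_j$ for $C_{bad}$, red $s_j = i_r$ for the reassigned clients). Beyond this, the argument is precisely the same averaging over uniformly random swap-outs already established for Lemma~\ref{lem:firststep}.
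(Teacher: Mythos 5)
Your proposal is correct and follows essentially the same route as the paper: the paper also re-runs the second family of swaps (swap in $i^*_r \in T^*_R - \cent(\overline B)$, swap out a random $i_r \in T_R - \oi$), reassigns only the clients of $C_{bad} \cap N^*(i^*_r)$ to $o_j$, sends the displaced $N(i_r)$ clients to $\phi(o_j)$ via Lemma \ref{lem:standard1}, and uses the same $\beta_R \leq \frac{t+1}{t}$ averaging. Your extra bookkeeping (each $j \in C_{bad}$ charged exactly once because $o_j \in T^*_R - \cent(\overline B)$, and the colour dichotomy separating the two reassigned groups) is exactly the justification the paper leaves implicit.
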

\begin{proof}
For each $i^*_r \in T^*_R - \cent(\overline B)$, swap $i^*_r$ in and a randomly chosen $i_r \in T_r - \oi$. Rather than reassigning all $j \in N^*(i^*_r)$ to $i^*_r$, we only reassign
those in $C_{bad} \cap N^*(i^*_r)$. Since $\deg(i_r) = 0$ then any other $j \in N(i_r)$ can be reassigned to $\phi(o_j)$ and which increases the cost by at most $2c^*_j$.

Summing over all $i^*_r$, observing that $C_{bad} \subseteq T^*_R - \cent(\overline B)$, and also observing that each $j \in C_{ok}$ has
$s_j$ closed at most $\beta_R \leq \frac{t+1}{t}$ times in expectation, we derive the inequality stated in Lemma \ref{lem:cancel_1}.
\end{proof}

Adding the bounds stated in Lemmas \ref{lem:firststep} and \ref{lem:cancel_1} shows that Theorem \ref{thm:local_block} holds in this case.


\subsection{Case $|T^*_R| \geq t^2+1, |T^*_B| \leq t$}\label{subsec:bigsmall}

In this case, we start by swapping in all of $T^*_B$ and swapping out all of $T_B$ (including, perhaps, $\oi$ if it is blue). In the same swap, we also swap in $\cent(T_B)$
and swap out a random subset of the appropriate number of facilities in $T_R - \oi$. This is possible as $|T_R - \oi| \geq t \geq |\cent(T_B)|$.
By random subset, we mean among all subsets of $T_r-\oi$ of the necessary size, choose one uniformly at random.

As with Section \ref{subsec:bigbig}, we begin with a definition of bad clients that is specific to this case:
\[ C_{bad} := N(T_B) \cap N^*(T^*_R - \cent(T_B)). \]
Clients $j \in C_{bad}$ may be involved in swaps where both $s_j$ and $\phi(o_j)$ is closed yet $o_j$ is not opened
and we cannot make this negligible with an averaging argument.

\begin{lemma}\label{lem:init_smbig}
\begin{eqnarray*}
0 & \leq & \sum_{j \in N^*(T^*_B \cup \cent(T_B))} (c^*_j - c_j) + \frac{1}{t}\sum_{j \in N(T_R)} (3c^*_j + c_j)
 + \sum_{j \in C_{bad}} (3c^*_j + c_j)
\end{eqnarray*}
\end{lemma}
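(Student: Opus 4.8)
The plan is to analyze the single randomized swap described just before the statement: open $T^*_B \cup \cent(T_B)$ and close $T_B$ together with a uniformly random subset $X \subseteq T_R - \oi$ whose size makes the swap color-balanced. Since at most $|T_B| \le t$ blue facilities and at most $|X| \le t$ red facilities are removed (and $|T_R - \oi| \ge t^2 \ge |X|$, so the subset exists), the swap touches at most $p = t^2 + 1$ facilities of each color. Local optimality of $S$ therefore gives $0 \le E[\cost(S') - \cost(S)]$ over the random choice of $X$, and I would upper bound the right-hand side by exhibiting an explicit, possibly suboptimal, reassignment of clients and applying linearity of expectation.

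First I would settle feasibility, which is the one genuinely delicate point. For a good blue $i \in \overline B$ the center $\cent(i)$ is red, so the only way $\cent(T_B)$ can contain a blue facility is via $\cent(\oi)$ when $\oi$ is blue; but in that event $\cent(\oi) \in \phi^{-1}(\oi) \cap \mB \subseteq T^*_B$, so it is \emph{already} opened as part of $T^*_B$ and contributes no net change to the blue count. Hence the blue facilities opened number exactly $|T^*_B| = |T_B|$, matching those closed, and the red facilities opened (those of $\cent(T_B)$ lying outside $T^*_B$, all red) are matched by taking $|X|$ equal to their number $m := |\cent(T_B) \setminus T^*_B| \le |T_B| \le t$.

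For the reassignment I would route every $j \in N^*(T^*_B \cup \cent(T_B))$ to $o_j$, now open, for a change of $c^*_j - c_j$, producing the first sum. Every remaining client whose $s_j$ is closed has $s_j \in T_B$ or $s_j \in X$. If $s_j \in T_B$ and $o_j$ was not opened, then $o_j \in T^*_R - \cent(T_B)$, i.e. $j \in C_{bad}$; I send it to $\phi(o_j) \in T \cap S$, and if that facility is also closed (a closed blue facility of $T_B$, possibly the closed blue $\oi$) I send it instead to $\cent(\phi(o_j)) \in \cent(T_B)$, which is open. Lemmas \ref{lem:standard1} and \ref{lem:standard2} then bound the change by $3c^*_j + c_j$. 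These clients are closed with probability one, giving the weight-one $C_{bad}$ term. Clients with $s_j \in X$ and $o_j$ not opened are handled identically, again at cost at most $3c^*_j + c_j$; all other clients stay at $s_j$ at no cost.

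The remaining work is the averaging that produces the $\tfrac1t$ factor. Since $X$ is a uniformly random subset of $T_R - \oi$ of size $m \le t$, each fixed facility of $T_R - \oi$ lies in $X$ with probability $m / |T_R - \oi| \le t / t^2 = 1/t$, so in expectation each $j \in N(X) \subseteq N(T_R)$ contributes at most $\tfrac1t (3c^*_j + c_j)$, which is the middle sum. I expect the crux to be the two structural checks above, namely the color-balancing observation that a blue center necessarily lies in $T^*_B$, and the verification that $\cent(\phi(o_j))$ is always open for the ``doubly closed'' clients; once these are in hand the bound follows from Lemmas \ref{lem:standard1} and \ref{lem:standard2} and linearity of expectation.
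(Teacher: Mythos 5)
Your argument is essentially identical to the paper's proof of this lemma: the same single randomized swap (open $T^*_B \cup \cent(T_B)$, close $T_B$ together with a balancing uniformly random subset of $T_R - \oi$), the same three-tier reassignment to $o_j$, then $\phi(o_j)$, then $\cent(\phi(o_j))$ justified by Lemmas \ref{lem:standard1} and \ref{lem:standard2}, and the same $1/t$ averaging over which red facilities are closed --- indeed your colour-balance check (that any blue centre already lies in $T^*_B$) is spelled out more carefully than in the paper, which only remarks that $|T_R - \oi| \geq t \geq |\cent(T_B)|$. The one loose end you share with the paper is the assertion that $s_j \in T_B$ with $o_j$ unopened forces $j \in C_{bad}$: this overlooks clients with $o_j$ outside the block $T$, who are reassigned to the still-open $\phi(o_j)$ at cost at most $2c^*_j$, a term absent from the stated inequality in both write-ups (and harmlessly absorbed by the $N(T_R \cup T_B)$ coefficient in Theorem \ref{thm:local_block}).
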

\begin{proof}
After the swap, reassign every $j \in N^*(T^*_B \cup \cent(T_B))$ from $s_j$ to $o_j$, for a cost change
of $c^*_j - c_j$. Every other $j$ that has $s_j$ being closed is first reassigned to $\phi(o_j)$. If this is not open,
then further move $j$ to $\cent(o_j)$ which must be open because the only facilities $i \in T_R \cup T_B$ with $\deg(i) > 0$ that were closed lie in $T_B$ and we opened $\cent(T_B)$.

If $j \in N(T_B) - C_{bad}$ then $o_j \in T^*_B \cup \cent(T_B)$ and we have already assigned $j$ to $o_j$. If $j \in C_{bad}$ then we have moved
$j$ to $\cent(\phi(o_j))$ and the cost change is $3c^*_j + c_j$ by Lemma \ref{lem:standard2}.

Finally, if $j \in N(T_R)$ then we either move $j$ to $\phi(o_j)$ or to $\cent(\phi(o_j))$ if $\phi(o_j)$ is not open.
The worst-case bound on the reassignment cost is $3c^*_j + c_j$ by Lemmas \ref{lem:standard1} and \ref{lem:standard2}. However, note that
$s_j \in T_R$ is closed with probability only $1/t$, since we close a random subset of $T_r-\oi$ of size at most $t$ and $|T_r - \oi| \geq t^2$.

%
\end{proof}

We still need to swap in $T^*_R - \cent(T_B)$. For each such facility $i^*_r$, 
swap in $i^*_r$ and swap out a randomly chosen $i_r \in T_R-\oi$.
The analysis of these swaps is essentially the nearly identical swaps in Section \ref{subsec:bigbig}, so we omit it and merely summarize what we get by combining the resulting inequalities
with the inequality from Lemma \ref{lem:init_smbig}. 

\begin{lemma}\label{lem:next_smbig}


\[ 
0 \leq \sum_{j \in N^*(T^*_R \cup T^*_B)} (c^*_j - c_j) + \sum_{j \in N(T_R)} \left(\frac{t^2+1}{t^2} \cdot 2c^*_j + \frac{1}{t} \cdot (3c^*_j + c_j)\right)
 + \sum_{j \in C_{bad}} (3c^*_j + c_j)
\]

%
\end{lemma}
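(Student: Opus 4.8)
The plan is to generate one more family of swaps that opens the red optimum facilities not already opened in Lemma~\ref{lem:init_smbig}, namely those in $T^*_R - \cent(T_B)$, and then to add the resulting averaged inequality to the bound of Lemma~\ref{lem:init_smbig}. Concretely, for each $i^*_r \in T^*_R - \cent(T_B)$ I would swap $i^*_r$ in and a uniformly random $i_r \in T_R - \oi$ out. Since the first block property gives $|T_R| = |T^*_R|$ and the case hypothesis gives $|T^*_R| \geq t^2+1$, we have $|T_R - \oi| \geq t^2$, so this is a valid single-colour swap of allowed size.

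For a fixed swap I would reassign each $j \in N^*(i^*_r)$ from $s_j$ to $o_j = i^*_r$ (cost change exactly $c^*_j - c_j$) and every remaining $j \in N(i_r)$ from $s_j$ to $\phi(o_j)$. The key local fact to verify is that $\phi(o_j)$ is still open: by the standing assumption every red $i_r \neq \oi$ has $\deg(i_r) = 0$, so $\phi(o_j) \neq i_r$, and $i_r$ is the only facility this swap closes. Lemma~\ref{lem:standard1} then bounds the reassignment cost of such a $j$ by $2c^*_j$, yielding $0 \leq \sum_{j \in N^*(i^*_r)}(c^*_j - c_j) + \sum_{j \in N(i_r)} 2c^*_j$ for each individual swap.

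Next I would average over the random $i_r$ and sum over all $i^*_r \in T^*_R - \cent(T_B)$. Each fixed $i_r \in T_R - \oi$ is closed $\frac{|T^*_R - \cent(T_B)|}{|T_R-\oi|}$ times in expectation, and since $|T^*_R - \cent(T_B)| \leq |T^*_R| = |T_R| \leq |T_R - \oi| + 1$ this ratio is at most $1 + \frac{1}{|T_R-\oi|} \leq \frac{t^2+1}{t^2}$. Over-estimating $N(T_R - \oi)$ by $N(T_R)$ (harmless, since it only enlarges the nonnegative right-hand side), this gives $0 \leq \sum_{j \in N^*(T^*_R - \cent(T_B))}(c^*_j - c_j) + \frac{t^2+1}{t^2}\sum_{j \in N(T_R)} 2c^*_j$.

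Finally I would add this to Lemma~\ref{lem:init_smbig}. The one bookkeeping point to check is that the two $N^*$-sums merge into a single sum over $N^*(T^*_R \cup T^*_B)$: the index sets $T^*_B \cup \cent(T_B)$ and $T^*_R - \cent(T_B)$ are disjoint (the latter is red and avoids $\cent(T_B)$) and their union is $T^*_R \cup T^*_B$, since the blue part of $\cent(T_B)$ already lies in $T^*_B$ while its red part together with $T^*_R - \cent(T_B)$ recovers $T^*_R$. Collecting the $N(T_R)$ coefficients as $\frac{t^2+1}{t^2}\cdot 2c^*_j + \frac{1}{t}(3c^*_j + c_j)$ and leaving the $C_{bad}$ term untouched produces exactly the statement of Lemma~\ref{lem:next_smbig}. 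I expect the only genuine subtleties to be confirming that $\phi(o_j)$ stays open and verifying the $N^*$-index merge; the averaging bound itself is routine and mirrors the swaps in Section~\ref{subsec:bigbig}.
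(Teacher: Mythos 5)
Your proposal is correct and follows exactly the route the paper intends: the paper itself only sketches this step, saying the swaps (swap in each $i^*_r \in T^*_R - \cent(T_B)$, swap out a random $i_r \in T_R - \oi$) are analyzed as in Section~\ref{subsec:bigbig} and then added to Lemma~\ref{lem:init_smbig}. Your verification that $\phi(o_j)$ stays open (since every $i_r \in T_R - \oi$ has $\deg(i_r)=0$), the $\frac{t^2+1}{t^2}$ averaging bound, and the merge of the $N^*$-index sets are all sound and supply precisely the details the paper omits.
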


We cancel the $+c_j$ terms for $j \in C_{bad}$ with one further collection of swaps.
For each $i^*_r \in T^*_R - \cent(T_B)$ we swap in $i^*_r$ and a randomly chosen $i_r \in T_R - \oi$.
The following lemma summarizes a bound we can obtain from these swaps. It is proven in essentially the same
way as Lemma \ref{lem:cancel_1}.
\begin{lemma}\label{lem:cancel}
\[ 0 \leq \sum_{j \in C_{bad}} (c^*_j - c_j) + \frac{t^2+1}{t^2} \cdot \sum_{j \in N(T_R)} 2c^*_j. \]
\end{lemma}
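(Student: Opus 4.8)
The plan is to mirror the proof of Lemma \ref{lem:cancel_1}, adapting the bookkeeping to the parameters of this case. For each $i^*_r \in T^*_R - \cent(T_B)$ I would perform the swap that opens $i^*_r$ and closes a uniformly random $i_r \in T_R - \oi$. Each such swap exchanges one red facility for one red facility and never touches the leader, so it is a feasible $1$-swap (hence a feasible $p$-swap), and local optimality of $S$ guarantees the resulting cost change is nonnegative. The whole point of this collection of swaps is to generate a $-c_j$ contribution for every $j \in C_{bad}$ so that the stray $+c_j$ terms appearing in Lemma \ref{lem:next_smbig} are cancelled, and to do so without reintroducing any new positive dependence on the local optimum assignment costs.

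The reassignment I would use in the analysis is deliberately frugal. When $i^*_r$ is opened I do \emph{not} move every client of $N^*(i^*_r)$ to its optimum facility; instead I move only those in $C_{bad} \cap N^*(i^*_r)$ to $o_j = i^*_r$, each at a cost change of exactly $c^*_j - c_j$. Every remaining client $j$ whose facility $s_j = i_r$ was just closed is instead sent to $\phi(o_j)$. The key structural fact that makes this legal is that every $i_r \in T_R - \oi$ has $\deg(i_r) = 0$ (by the standing assumption that all positive-degree non-leader facilities of $T$ are blue), so no optimum facility maps to $i_r$; in particular $\phi(o_j) \neq i_r$, and $\phi(o_j)$ remains open after the swap. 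Lemma \ref{lem:standard1} then bounds the reassignment cost of each such $j$ by $2c^*_j$.

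Finally I would sum over all $i^*_r \in T^*_R - \cent(T_B)$ and take expectations. Since $C_{bad} \subseteq N^*(T^*_R - \cent(T_B))$, each $j \in C_{bad}$ is moved to $o_j$ in exactly the single swap with $i^*_r = o_j$, contributing the term $\sum_{j \in C_{bad}} (c^*_j - c_j)$. For the cheap reassignments, because there are $|T^*_R - \cent(T_B)| \leq |T_R|$ swaps and each closes one uniformly random element of $T_R - \oi$, the expected number of times any fixed $i_r$ is closed is at most $|T_R|/|T_R - \oi| \leq (t^2+1)/t^2$, using $|T_R - \oi| \geq t^2$, which holds here since $|T_R| = |T^*_R| \geq t^2 + 1$. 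Hence each $j \in N(T_R)$ contributes at most $\frac{t^2+1}{t^2} \cdot 2c^*_j$ in expectation, and clients in $N(\oi)$ contribute nothing because $\oi$ is never closed. Assembling these gives exactly the inequality of Lemma \ref{lem:cancel}. The only real subtlety, and the step I would check most carefully, is this averaging: I must confirm that the $-c_j$ contribution is isolated to $C_{bad}$ (so nothing cancels in the wrong direction) while every other displaced client always has the open fallback $\phi(o_j)$ available, and that the per-facility expected closure count is correctly bounded by $(t^2+1)/t^2$ rather than a larger ratio.
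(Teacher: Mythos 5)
Your proposal is correct and follows exactly the route the paper intends: the paper's proof of Lemma~\ref{lem:cancel} is simply a pointer to the argument for Lemma~\ref{lem:cancel_1}, and you have reproduced that argument with the right adaptations (using $C_{bad} \subseteq N^*(T^*_R - \cent(T_B))$, the fact that every $i_r \in T_R - \oi$ has $\deg(i_r)=0$, and the expected closure count $|T^*_R - \cent(T_B)|/|T_R-\oi| \leq (t^2+1)/t^2$ from $|T_R| = |T^*_R| \geq t^2+1$). No gaps.
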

Adding this to the bound from Lemma \ref{lem:next_smbig} shows
\[ 0 \leq \sum_{j \in N^*(T^*_R \cup T^*_B)}(c^*_j - c_j) + \sum_{j \in N(T_R \cup T_B)} \left(\frac{t^2 + 3t + 1}{t^2} \cdot 4c^*_j + \frac{1}{t} \cdot c_j \right). \]


\subsection{Case $|T^*_R| \leq t^2, |T^*_B| \geq t+1$}\label{subsec:smallbig}

Because $\phi^{-1}(i) \subseteq T^*_R$ and $\deg(i) > 0$ for each $i \in \overline B$, then $|\overline B| \leq t^2$ as well.
We will swap all of $T^*_R$ for all of $T_R$, but we will also swap some blue facilities at the same time.
Let $B' = \overline B$ and let $\overline B'$ be an arbitrary subset of $T^*_B$ of size $|\overline B|$.

If $\oi \not\in T_R \cup B'$ then add $\oi$ to $B'$.
If $\cent(\oi) \not\in T^*_R \cup \overline B'$ then add $\cent(\oi)$ to $\overline B'$.
At this point, $\left| |\overline B'| - |B'| \right| \leq 1$ Add an arbitrary $i^*_b \in T^*_B - \overline B'$ to $\overline B'$ or $i_b \in T_B - B'$ to $B'$
to ensure $|\overline B'| = |B'|$.

We begin by swapping out $T_R \cup B'$ and swapping in $T^*_R \cup \overline B'$.
The following list summarizes the important properties of this selection, the first point emphasizes that this swap will not improve the objective function
since $S$ is a locally optimum solution for the $p$-swap heuristic where $p = t^2 + 1$.
\begin{itemize}
\item $|B'| = |\overline B'| \leq t^2+1$ and $|T^*_R| \leq t^2$.
\item $T^*_R$ was swapped in and $T_R$ was swapped out.
\item For each $i \in T_R \cup T_B$ with $\deg(i) > 0$, $i$ was swapped out and $\cent(i)$ was swapped in.
\end{itemize}

The following is precisely the clients $j$ that will be moved to $\cent(\phi(o_j))$ in our analysis.
\[ C_{bad} := [N(T_R \cup B') - N^*(T^*_R \cup \overline B')] \cap \{j : \phi(o_j) \in T_R \cup B'\}. \]
As before, define $C_{ok} = N(T_R \cup T_B) - C_{bad}$.

The following bound is generated from swapping out $T_R \cup B'$ and swapping in $T^*_R \cup \overline B'$
and follows from the same arguments we have been using throughout the paper.
\begin{lemma}\label{lem:nearly}
\[0 \leq \sum_{j \in N^*(T^*_R \cup \overline B')} (c^*_j - c_j) + \sum_{j \in C_{ok} \cap N(T_R \cup B')} 2c^*_j + \sum_{j \in C_{bad}} (3c^*_j + c_j) \]
\end{lemma}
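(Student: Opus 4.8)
The plan is to run the standard local-search template: the displayed swap (close $T_R \cup B'$, open $T^*_R \cup \overline B'$) has already been constructed to be a legal move for the $p$-swap heuristic with $p = t^2+1$, so I would first record that legality and then exhibit an explicit client reassignment. The first bulleted property gives $|B'| = |\overline B'| \leq t^2+1$ and $|T^*_R| \leq t^2$, while the construction of $B'$ and $\overline B'$ keeps both colours balanced; one checks $B', \overline B' \subseteq \mB$, using that $\cent(\oi)$ is added to $\overline B'$ only when it is blue since a red $\cent(\oi)$ already lies in $T^*_R$. Combined with the block property $|T_R| = |T^*_R|$, the move closes at most $t^2$ red and at most $t^2+1$ blue facilities, so since $S$ is locally optimal,
\[ 0 \leq \cost(S \cup (T^*_R \cup \overline B') - (T_R \cup B')) - \cost(S). \]
It then suffices to describe a (possibly suboptimal) reassignment of the clients whose service facility was closed and sum the resulting cost increases.

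The reassignment I would use is the one implicit in the definitions of $C_{bad}$ and $C_{ok}$. Every $j \in N^*(T^*_R \cup \overline B')$ has $o_j$ open after the swap, so I move it from $s_j$ to $o_j$ for an exact change of $c^*_j - c_j$. For every remaining client $j \in N(T_R \cup B')$, i.e.\ one whose $s_j$ was closed but whose $o_j$ was not opened, I first route $j$ to $\phi(o_j)$: if $\phi(o_j) \notin T_R \cup B'$ it is still open and Lemma \ref{lem:standard1} bounds the increase by $2c^*_j$, and such clients lie in $C_{ok} \cap N(T_R \cup B')$ by definition. Every client not yet touched keeps $s_j$ and pays nothing.

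The one point needing care is the leftover set, namely $j \in N(T_R \cup B') - N^*(T^*_R \cup \overline B')$ with $\phi(o_j) \in T_R \cup B'$, which is exactly $C_{bad}$; for these I route $j$ to $\cent(\phi(o_j))$ and invoke Lemma \ref{lem:standard2} for the bound $3c^*_j + c_j$. The main obstacle is verifying that $\cent(\phi(o_j))$ is open after the swap: this holds because $o_j \in \phi^{-1}(\phi(o_j))$ forces $\deg(\phi(o_j)) > 0$, and $\phi(o_j) \in T_R \cup B' \subseteq T_R \cup T_B$, so the third bulleted property (that $\cent(i)$ was swapped in for every positive-degree $i \in T_R \cup T_B$) applies. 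Combining the three groups yields
\[ 0 \leq \sum_{j \in N^*(T^*_R \cup \overline B')}(c^*_j - c_j) + \sum_{j \in N(T_R \cup B') - N^*(T^*_R \cup \overline B') - C_{bad}} 2c^*_j + \sum_{j \in C_{bad}}(3c^*_j + c_j), \]
and since the middle index set is contained in $C_{ok} \cap N(T_R \cup B')$ and every $c^*_j \geq 0$, enlarging it to all of $C_{ok} \cap N(T_R \cup B')$ only weakens the inequality, giving exactly the statement of Lemma \ref{lem:nearly}.
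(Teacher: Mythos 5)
Your proof is correct and is exactly the argument the paper intends: the paper states Lemma \ref{lem:nearly} "follows from the same arguments we have been using throughout the paper" and omits the details, and your write-up supplies precisely those details (legality of the swap via $|T_R|=|T^*_R|\leq t^2$ and $|B'|=|\overline B'|\leq t^2+1=p$, reassignment of $N^*(T^*_R\cup\overline B')$ to $o_j$, of the $\phi(o_j)$-surviving clients via Lemma \ref{lem:standard1}, and of $C_{bad}$ to $\cent(\phi(o_j))$ via Lemma \ref{lem:standard2}, with the openness of $\cent(\phi(o_j))$ justified by the third bulleted property of the swap). No gaps.
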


Next, let $\kappa_B : (T^*_B - \overline B') \rightarrow (T_B - B')$ be an arbitrary bijection of the remaining blue facilities that were not swapped.
For every $i^*_b \in T^*_B - \overline B'$, consider the effect of swapping in $i^*_b$ and swapping out $\kappa_B(i^*_b)$. Note that every facility $i_b$ swapped out in this way has $\deg(i_b) = 0$.
So we can derive two possible inequalities from such swaps. 
\begin{equation}\label{eqn:blue1}
0 \leq \sum_{j \in N^*(i^*_b)} (c^*_j - c_j) + \sum_{j \in N(\kappa_B(i^*_b))} 2c^*_j 
\rm{~~~and~~~}
0 \leq \sum_{j \in N^*(i^*_b) \cap C_{bad}} (c^*_j - c_j) + \sum_{j \in N(\kappa_B(i^*_b))} 2c^*_j.
\end{equation}
The second inequality follows from only reassigning clients $j \in N^*(i^*_b) \cap C_{bad}$ from $s_j$ to $o_j$.

Adding the bound in Lemma \ref{lem:nearly} to the sum of both inequalities in 
over all $i^*_b \in T^*_B - \overline B'$
and noting that $\kappa_B(T^*_B - \overline B) \cap (T_R \cup B') = \emptyset$, we see
\[ 0 \leq \sum_{j \in N^*(T^*_R \cup T^*_B)} (c^*_j - c_j) + \sum_{j \in N(T_R \cup T_B)} 4c^*_j. \]


\section{Locality Gaps}\label{sec:localgap}
Here we prove Theorem \ref{thm:localgap}. Let $p,\ell$ be integers satisfying $p \geq 1$ and $\ell \geq 2p$.
Consider the instance with $k_r = p+1$ and
$k_b = p(\ell+1)$ depicted in Figure \ref{fig:localgap}. Here, $\beta = 2p$ and $\alpha = \beta \cdot (\ell-p)$.

\begin{figure}
\includegraphics[width=1.0\textwidth]{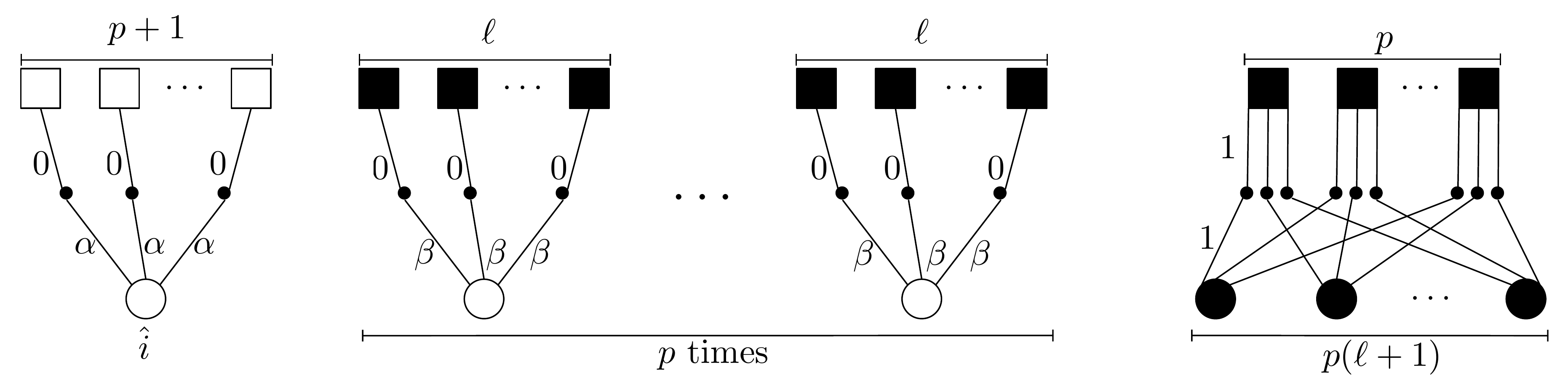}
\caption{Illustration of the bad locality gap. Blue facilities are depicted with black and red facilities are depicted with white.
The top facilities are the global optimum and the bottom are the local optimum (all of $\mR$ and $\mB$
is depicted in the picture).
Each client is represented by a small black dot.
The metric is the shortest path metric of the presented graph, if two locations are not connected in the picture then their
distance is a very large value.
Every edge in the right-most group with $p^2(\ell+1)$ clients has length 1.
Recall $\beta = 2p$ and $\alpha = (\ell-p) 2p$.} \label{fig:localgap}
\end{figure}

The cost of the local optimum solution is $\alpha \cdot (p+1) + \beta \cdot p\ell + p^2(\ell+1)$ and the cost of the global optimum solution
is simply $p^2(\ell+1)$. Through some careful simplification, we see the local optimum solution has cost at least $5 + \frac{2}{p} - \frac{10p}{\ell+1}$
times the global optimum solution.

To complete the proof of Theorem \ref{thm:localgap}, we must verify that the presented local optimum solution
indeed cannot be improved by swapping up to $p$ facilities of each colour.

We verify that the solution depicted in Figure \ref{fig:localgap} is indeed a locally optimum solution.
Suppose $0 \leq R \leq p$ red facilities and $0 \leq B \leq p$ blue facilities are swapped. We break the analysis into four simple cases.

In what follows, we refer to the leftmost collection of only red facilities in Figure \ref{fig:localgap}
as the {\em left group}, the rightmost collection of only blue facilities
as the {\em right group}, and the remaining facilities as the {\em middle group}. We
also let the term {\em subgroup} refer to one of the $p$ smaller collections of facilities in the middle group.
In each case, let $B' \leq B$ denote the number of global optimum facilities from the middle group that are swapped in. Recall that
$\oi$ denotes the local optimum facility in the left group.

\subsection*{Case $R = 0$}

The only clients that can move to a closer facility are the $B'$
clients in the middle group that have their associated optimum facilities swapped in. Also, precisely $B \cdot (p-B+B')$ facilities in the right
group are not adjacent to any open facility so their assignment cost increases by 2.

Overall, the assignment cost change is exactly $2B \cdot (p-B+B') - \beta B'$.
As $\beta = 2p$ and $B \leq p$, this quantity is minimized at $B' = B$ leaving us with a cost change of
$2Bp - \beta B = 0$. So, if $R = 0$ then no choice of blue facilities leads to an improving swap.

\subsection*{Case $R \geq 1$ and $\oi$ is not swapped out.}
In the left group, precisely $R$ clients move to their close facility and the total savings is $-\alpha R$. In the middle group,
precisely $B'$ clients move to their close facility and the total savings is $-\beta B'$.

In fact, it is easy to see that the cheapest such swap occurs when the $B' \leq p \leq \ell$ facilities in the middle group that are swapped in
are part of subgroups where the local optimum facility is swapped out (which is why we assume $R \geq 1$). 
So, there are exactly\ $R\ell - B'$ other clients
$j$ where both $o_j$ and $s_j$ are closed and each pays an additional $\geq \beta$ to be connected. Finally, the right group pays an additional
$2B(p-B+B')$ to be connected.

Overall, the cost increases by $2B(p-B+B') + \beta (R\ell-2B') - \alpha R$. As $2B - 2\beta = 2B - 4p \leq -2p$,
this is minimized at $B' = B$. The cost change is then $2Bp + \beta (R\ell - 2B) - \alpha R$. Recall that $\alpha = (\ell-p) \beta < \ell \beta$,
so this is, in turn, minimized when $R = 1$.

Reducing further, the cost change is $2Bp + \beta \ell - 2B\beta - \alpha$. Setting $B = p$ to maximize, the change is
$2p^2 + 2p\ell - 4p^2 - (\ell-p)2p = 0$. So, no swap that swaps at least one red facility but not $\oi$ can find a cheaper solution.

\subsection*{Case $R = 1$ and $\oi$ is swapped out.}
The cost change in the left group is $(p-1) \alpha \geq 0$ since $p$ clients must move an additional $\alpha$ and only one client saves
$\alpha$. The cost change from the remaining groups is the same as in the first case $R = 0$, so the overall assignment cost does not decrease.

\subsection*{Case $R \geq 2$ and $\oi$ is swapped out.}
The cost change in the first group is exactly $(p+1-2R) \alpha$. Similar to the second case, the cost change in this case is minimized
when each subgroup that has its local optimum facility closed also has one of its global optimum facility opened, and all $B'$ facilities opened in the middle group belong to a subgroup
having its local optimum closed.

The cost change is then $2B(p-B+B') + \beta ((R-1)\ell - 2B') + \alpha (p+1-2R)$. Again, this is minimized at $B' = B$
which yields a cost change of $2Bp + \beta((R-1)\ell - 2B) + \alpha(p+1-2R)$. Now, $\beta\ell \leq 2\alpha$ because $\ell \geq 2p$, so
this is further minimized at $R = p$ and the cost increase is $2Bp + \beta((p-1)\ell -2B) - \alpha(p-1)$. Again, setting $B = p$ to minimize
the cost change we see it is $2p^2 + \beta((p-1)\ell - 2p) - \alpha(p-1)$.

Expanding with $\beta = 2p$ and $\alpha = 2p(\ell-p)$, the cost change finally seen to be
\[ -2p^2 + 2p(p-1)\ell - 2p(\ell-p)(p-1) = 2p^3 - 4p^2. \]
The last expression is nonnegative for $p \geq 2$.

\subsection{Summarizing}
No matter which $\leq p$ red and $\leq p$ blue facilities are swapped, the above analysis shows the assignment cost does not decrease.
The only potentially concerning aspect is that the very last case derived an inequality that only holds when $p \geq 2$. Still, this analysis
does apply to the single-swap case (i.e. $p = 1$) since the last case with $R \geq 2$ does not need to be considered when $p = 1$.


\section{Conclusion}\label{sec:conclusion}
We have demonstrated that a natural $p$-swap local search procedure for \rbm is a $(5 + O(1/\sqrt p))$-approximation.
This guarantees a better approximation ratio than the single-swap heuristic from \cite{HKK12}, which we showed
may find solutions whose cost is $(7-\epsilon) \cdot OPT$ for arbitrarily small
$\epsilon$. Our analysis is 
essentially tight in that the $p$-swap heuristic may find solutions whose cost is $(5+\frac{2}{p} - \epsilon) \cdot OPT$.

More generally, one can ask about the $p$-swap heuristic for the generalization
where there are many different facility colours. If the number of colours is part of the input
then any local search procedure that swaps only a constant number of facilities in total cannot provide good approximation guarantees \cite{KKNSS11}. However, if the number of different colours
is bounded by a constant, then perhaps one can get better approximations through multiple-swap heuristics.

However, generalizing the approaches taken with \rbm to this setting seems more difficult;
one challenge is that it is not possible to get such nicely structured blocks.
It would also be interesting to see what other special cases of \mm admit good local-search based approximations.
For example, the \textsc{Mobile Facility Location} problem studied in \cite{AFS13} is another special case of \mm
that admits a $(3+\epsilon)$-approximation through local search.

Finally, the locality gap of the $p$-swap heuristic for \km is known to be $3 + \frac{2}{p}$ \cite{AGKMMP04}
and we have just shown it is at least $5 + \frac{2}{p}$ for \rbm. Even if the multiple-swap heuristic
for the generalization to a constant number of colours can provide a good approximation, this constant may be worse
than the alternative 8-approximation obtained through Swamy's general \mm approximation \cite{S14}.

\end{document}